\documentclass[journal]{IEEEtran}

\usepackage[pdftex]{graphicx}
\graphicspath{{../pdf/}{../jpeg/}}
\DeclareGraphicsExtensions{.pdf,.jpeg,.png}
\usepackage{xcolor,soul,framed}
\colorlet{shadecolor}{yellow}

\usepackage{amsmath}
\usepackage{amssymb}
\usepackage{amsthm}
\usepackage{mathtools}

\DeclareMathOperator*{\argmin}{arg\,min}

\newtheorem{theorem}{Theorem}
\newtheorem{proposition}{Proposition} 

\theoremstyle{definition}
\newtheorem{definition}{Definition}
\newtheorem{remark}{Remark}

\usepackage{array, mdwmath, mdwtab, eqparbox, multirow, float}
\usepackage{booktabs}
\usepackage{multicol}
\usepackage[bottom]{footmisc} 

\usepackage{mathptmx}
\usepackage{helvet}
\usepackage{courier}
\usepackage{type1cm}

\usepackage[ruled]{algorithm}
\usepackage{algpseudocode}

\usepackage{makeidx}
\usepackage{cite} 
\begin{document}
\title{ Distributed Deep Variational Approach for Privacy-preserving Data Release
    }
   \author{Zahir Alsulaimawi,~\IEEEmembership{ Member,~IEEE,}
      Huaping Liu,~\IEEEmembership{Member,~IEEE}

}


\maketitle

\begin{abstract}
Federated learning (FL) lets distributed nodes train a shared model without exchanging their raw data, but in privacy-sensitive deployments medical sensors, IoT devices, wearables the protection offered by keeping data local is incomplete: gradients, model updates, and the released representations themselves can leak sensitive attributes. We propose the \emph{Gaussian Privacy Protector} (GPP), a data-release framework for continuous, high-dimensional inputs that learns a stochastic encoder mapping raw data to a low-dimensional sanitized representation. The encoder is trained against a variational lower bound on the mutual information between the released representation and a designated sensitive attribute, while a separate cross-entropy term preserves a designated utility attribute, with a Lagrange multiplier $\beta$ controlling the trade-off. We then extend GPP to the federated setting, in which each client trains a local encoder, sensitive labels never leave the client, and the aggregator receives only sanitized representations giving instance-level privacy protection in addition to the standard ``raw data stays local'' guarantee of FL. We evaluate GPP on MNIST (digit-sum utility, parity sensitive), CelebA (smiling vs.\ gender), and HAPT-Recognition (activity vs.\ subject identity). Across all three benchmarks, GPP attains utility within roughly one percentage point of an unconstrained autoencoder baseline while reducing the adversary's AUC to near random guessing.
\end{abstract}

\begin{IEEEkeywords}
Privacy-preserving data release, federated learning, variational mutual information, adversarial representation learning, information bottleneck, deep neural networks
\end{IEEEkeywords}

%
\IEEEpeerreviewmaketitle

\section{Introduction}

\IEEEPARstart{W}{ith}  today's privacy-preserving techniques, it is critical to ensure that the data are able to convey useful information without disclosing sensitive information.  
Taking a small step toward achieving privacy could be as simple as removing sensitive information from a dataset. We may use the example of removing race-related information from an image while keeping gender as public information. However, information regarding utility data can be lost due to the correlation between sensitive and valuable information. 
Several privacy-preserving computation strategies are available for machine learning, including anonymization and differential privacy, both of which perturb the data to some extent \cite{wagner2018technical}.
In contrast to these perturbation techniques, which provide privacy guarantees for categorical attributes, they may not be appropriate for continuous high-dimensional features such as images, videos, and audio clips \cite{narayanan2008robust, sarwate2013signal}. 
A model-inversion attack can also be used to reconstruct a part of the training data using only predictions, as when recovering facial recognition images \cite{fredrikson2015model}. In addition, by observing only the model's predicted values, a membership-inference technique can determine if a particular training point appears in the model's training data \cite{shokri2017membership}.

As part of our work, we address the issue of privacy-preserving data release, which aims to minimize exposure of sensitive information associated with data while still allowing useful data to be released.
This paper presents a framework for mapping continuous data from a high-dimensional space into a sanitized format in another compressed space. 
We examine a scenario where users have two types of data: sensitive data that they wish to keep private and utility data (which is not sensitive) that they wish to make publicly available. 
Besides providing as much utility information as possible, the new representation conceals any sensitive information regarding private events.
We call this approach Gaussian privacy protector (GPP). The objective of GPP is to train a probabilistic map, an encoder, along with an adversarial network,   classifier(s), to recover private information from a sanitized dataset, and with a utility network,   classifier(s), to obtain utility information. A privacy-utility accuracy is achieved by utilizing an information-theoretic approach for private and non-private data and analyzing the cross-entropy (CE) function.


According to ubiquitous data collection, individuals constantly produce diverse swaths of data, including location, health, and financial information. 
These data streams are often obtained by distributed learning techniques such as the internet of things (IoT), ubiquitous sensing, edge computing, and many other distributed systems \cite{ma2021uav}.
Federated learning (FL) enables the training of collaborative models over a large number of participated IoT applications using a global server.  
However, FL may be exploited by malicious participants through backdoor attacks. Also, the leakage of data, gradients, and even models during the updating and transmitting process has raised user privacy and security concerns, limiting its application \cite{liu2021privacy}, \cite{lyu2020towards}.
The paper proposes a distributed system to address the issues raised by FL environments.  The privacy problem can be framed as a distributed computing environment where multiple GPPs train sensitive data locally and share sanitized data with an authorized aggregator (server).

\noindent\textbf{Contributions:}
Our contributions lie not in inventing the individual ingredients---variational mutual-information bounds, adversarial censoring, and Gaussian latent spaces all appear in prior work that we cite explicitly in Section~\ref{sec:majhead}---but in (i) integrating them into a single end-to-end privacy-utility objective with a transparent variational derivation, and (ii) extending the framework to a distributed setting in which each client retains its sensitive labels and a local adversary. With that scoping in mind, the specific contributions of this paper are as follows.

\begin{itemize}
\item We give a self-contained variational derivation of an end-to-end privacy-utility objective that combines a lower bound on $I(Z;S)$ with an upper bound on $H(U \mid Z)$ in a single saddle-point form, and present GPP, a Gaussian-latent encoder trained against this objective with the trade-off exposed as a single tunable parameter $\beta$.

\item We extend GPP to a distributed setting (Distributed GPP) in which each client trains a local encoder and a local adversary, and the aggregator receives only sanitized representations and utility labels. This architectural split enables instance-level privacy protection that complements the standard ``raw data stays local'' guarantee of federated learning, and is the principal contribution of this paper.

\item We prove a compositional privacy bound (Proposition~\ref{prop:distributed_privacy}) for the distributed setting under IID-data and honest-aggregator assumptions, with an explicit residual term $\delta$ accounting for utility-label leakage and an explicit scoping remark identifying the threats the bound does and does not address.

\item We characterize the communication-cost trade-off of Distributed GPP relative to gradient-based federated learning, identifying the regime $b(d_z + d_u) < |\theta|$ in which sanitized-representation transmission strictly reduces per-round payload.

\item We evaluate GPP on three benchmarks (MNIST, CelebA, HAPT-Recognition) covering binary, multi-class, and identity-style sensitive attributes, and conduct ablations on $\beta$, $d_z$, the adversary update frequency $k$, and the utility-sensitive correlation $\rho$.
\end{itemize}

The rest of the paper is organized in the following manner. The related work is presented in Section II, while the preliminary work is discussed in Section III. In Section IV, the problem formulation, Bayesian network, and threat model are formalized. The details of the GPP's algorithm are explained in Section V. The distributed learning algorithm is introduced in Section VI. In section VII, the GPP framework is  empirically examined. The paper concludes with Section VIII. 

\section{Related Work}
\label{sec:format}

Several generic privacy-preserving  models have been proposed to protect data privacy by increasing the amount of uncertainty, for instance, k-anonymity  \cite{sweeney2002k}, l-diversity \cite{machanavajjhala2006diversity}, and t-closeness \cite{li2007t}.
These approaches are only suitable for low-dimensional data because quasi-identifiers and sensitive attributes cannot be easily defined for high-dimensional data.
The most prominent scheme, however, is given by differential privacy (DP) \cite{dwork2014algorithmic, oneto2017differential}. DP is a mathematical analysis leading to a strong definition of privacy on a statistical basis.  By adding well-designed noise to a database, DP creates a more formal method for open-sourcing a database and keeping individual records private.
Privacy guarantees that require more noise in the data often limit its application scenarios, especially when high accuracy of learning tasks are required.
Additionally, privacy guarantees may be compromised by ignoring the data distribution in the methods described above. As exemplified by \cite{kifer2011no} and \cite{liu2016dependence}, a malicious adversary without knowledge of how the data are correlated can compromise the practical security of DP.

Encrypting the data with cryptographic operations is another way to protect the data's privacy \cite{gilad2016cryptonets}. 
Ciphers, also called encryption algorithms, are systems for encrypting and decrypting data. 
The primary purpose of encryption methods is to keep sensitive information secret from others by processing readable data into a long series of random or pseudo-random ciphers \cite{carlet2021boolean}. A typical approach is to use secure multi-party computation (SMC) \cite{yao1982protocols}, where each party uses a set of cryptographic methods, and the oblivious transfer scheme to compute a function using their private data jointly \cite{lindell2020secure}. Many IoT devices currently use encryption protocols to protect their dada, e.g., the health care industry \cite{patil2014big} and smart home devices \cite{abu2020towards}. 
In \cite{moriai2019privacy}, \cite{sun2021novel} deep learning is used in combination with encryption to enhance privacy-preserving by keeping high utility gain and maintaining a low leakage rate of sensitive information. 

The database community employs privacy-preserving data mining (PPDM) techniques to ensure that no instances of the database can be attributed to a person in terms of private information \cite{matwin2013privacy,mendes2017privacy,korolova2010privacy}.
In addition to PPDM, many privacy-preserving machine learning (PPML) techniques \cite{psychoula2018deep,zhao2018privacy,sarwate2013signal,rubinstein2009learning,chaudhuri2013near,abadi2016deep} have been proposed to deal with data beyond those in the traditional databases.

In general, previous works of literature ensure that private information cannot be mined and make no assumptions about non-private information.  
On the other hand, our work assumes predefined sets of private and non-private information. Such a formulation makes the proposed data sanitization more effective and provides a flexible tradeoff between privacy and the ability to extract non-private information from the sanitized data.

Researchers have investigated the privacy-enhancing effects of feature selection in several studies \cite{banerjee2011privacy,jafer2015framework}. Data, in this case, is not being released in its entirety but, rather, selected features only. 
By zeroing out feature components in the approximate null space, the work in \cite{xu2017cleaning} proposes a privacy mechanism to minimize confidential information exposure that the client may wish to keep private.
The system described in \cite{enev2012sensorsift} transforms data in such a way that the correlation between data and desired information increases but decreases between data and confidential information.  

Information theorists have studied Privacy-preserving notions under the rubric of information-theoretic privacy \cite{basciftci2016privacy,wang2017privacy,makhdoumi2014information,du2012privacy,tishby2000information}. Information-theoretic privacy has traditionally been quantified by mutual information, which measures how well an adversary can refine its belief about the private features of the data with access to it.
Other works employ mutual information minimization between latent variables in different ways. 
In \cite{creswell2017adversarial, klys2018learning}, a variational autoencoder (VAE) based generative model was proposed using mutual information minimization between the latent space of the VAE and the labeled features. 
In \cite{louizos2015variational}, independence between latent variables is enforced by an additional penalty term based on the  \textit{Maximum Mean Discrepancy}. 
Several other works, such as \cite{edwards2015censoring,hamm2015preserving,ganin2016domain}, have utilized adversarial learning methods in order to learn latent representations that are not directly comparable to ours.

The research community has recently proposed distributed learning architectures that can allow multiple users to share their data to train deep learning models \cite{peteiro2013survey}.
The distribution of information and cooperation between users could lead to the leakage of sensitive information among parties \cite{zhang2018survey}.  
There are concerns about privacy and confidentiality that prevent organizations, such as medical institutions, from fully utilizing distributed deep learning \cite{beaulieu2018privacy,jeon2019privacy}.
To overcome this challenge, researchers have proposed various methods for protecting data privacy in distributed machine learning architectures.
As an alternative to the traditional centralized approach to training artificial intelligence models, federated learning (FL) has recently emerged as a promising alternative. At its core, FL enables multiple parties to train a global model collaboratively without exposing their private data \cite{yang2019federated}.
Numerous practical applications of FL can be found in situations where data is distributed and privacy is essential. For example, it has exhibited exemplary performance and robustness for healthcare systems \cite{xu2020federated} and wireless networks \cite{chen2020joint}. 

\section{Problem Setting}
\subsection{Problem Formulation}

Consider a data owner who wishes to make their data publicly available while preserving the ability to extract useful information and ensuring privacy for sensitive attributes.

Let $X$, $Z$, $U$, and $S$ be random variables representing raw data, released data, utility attributes, and sensitive attributes, respectively. We denote $X$ as continuous high-dimensional raw data, $U$ as utility attributes the user is willing to reveal, $S$ as private attributes the user wishes to hide (e.g., gender, race, and age), and $Z$ as the released (sanitized) data.

We consider instance vectors $\textbf{x}\in\mathbb{R}^{d_{x}}$, $\textbf{z}\in\mathbb{R}^{d_{z}}$, $\textbf{u}\in\mathbb{R}^{d_{u}}$, and $\textbf{s}\in\mathbb{R}^{d_{s}}$, where $d_{z}\ll d_{x}$, corresponding to realizations of $X$, $Z$, $U$, and $S$, respectively. The constraint $d_{z}\ll d_{x}$ ensures that the released data lies in a compressed, lower-dimensional space.

The objective of this work is to develop a stochastic mapping $P(Z|X)$ that takes $X$ as input and outputs $Z$, with the aim of preserving information about the utility variable $U$ while revealing minimal information about the sensitive variable $S$.

As a concrete example, consider facial images: $\textbf{x}^{i}=[x_{1}^{i}, x_{2}^{i}, \ldots, x_{d_{x}}^{i}]^{T}$ represents a face image $i$ with $d_{x}$ pixels; $\textbf{u}^{i}=[u_{1}^{i}, u_{2}^{i}, \ldots, u_{d_{u}}^{i}]^{T}$ represents target utility features (e.g., facial expressions, wearing glasses); $\textbf{s}^{i}=[s_{1}^{i}, s_{2}^{i}, \ldots, s_{d_{s}}^{i}]^{T}$ represents sensitive features (e.g., gender, race, and disability); and $\textbf{z}^{i}=[z_{1}^{i}, z_{2}^{i}, \ldots, z_{d_{z}}^{i}]^{T}$ is the released data designed to remove information about sensitive features $\textbf{s}$ while retaining information about utility features $\textbf{u}$.

\subsection{Bayesian Model}

The raw data $X$ carries inherent utility attributes $U$ and sensitive attributes $S$, with joint distribution $P(X,U,S) = P(U,S)\,P(X \mid U,S)$ in which $U$ and $S$ may themselves be correlated. The encoder defines a stochastic kernel $P(Z \mid X)$, yielding the joint factorization
\begin{equation}
P(X,Z,U,S) = P(U,S)\,P(X \mid U,S)\,P(Z \mid X).
\label{eq:joint}
\end{equation}
For the released representation to be a privacy-meaningful summary, we require that any inference of $U$ or $S$ by an external party rely \emph{only} on $Z$:
\begin{equation}
U \perp X \mid Z, \qquad S \perp X \mid Z.
\label{eq:bottleneck_assumption}
\end{equation}
We treat \eqref{eq:bottleneck_assumption} as a \emph{design objective} enforced by the bottleneck structure of the encoder, not as an inherent property of the data. A neural encoder cannot enforce it exactly; the residual conditional dependence $I(U; X \mid Z)$ is bounded by the slack between our variational utility upper bound and the true conditional entropy $H(U \mid Z)$, and is judged empirically by how close the achieved utility AUC sits to the No-Privacy upper bound. Under \eqref{eq:bottleneck_assumption}, the inference task reduces to learning $P(U \mid Z)$ and $P(S \mid Z)$, illustrated in Figure~\ref{fig:bayesian_network}: GPP shapes $P(Z \mid X)$ so that $P(U \mid Z)$ remains informative (low $H(U \mid Z)$) while $P(S \mid Z)$ collapses toward the prior $P(S)$ (low $I(Z;S)$).

\begin{figure}[!t]
    \centering
    \includegraphics[width=0.8\columnwidth]{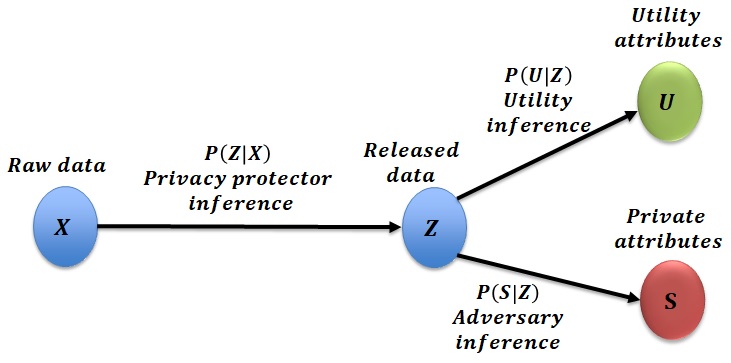}
    \caption{Inference structure of the GPP framework: $X$ is mapped to $Z$ by the privacy protector $P(Z \mid X)$, from which $U$ and $S$ are inferred via $P(U \mid Z)$ and $P(S \mid Z)$.}
    \label{fig:bayesian_network}
\end{figure}

\subsection{Threat Model}
\label{subsec:threat_model}

We adopt an honest-but-curious threat model. The data owner faithfully runs the encoder $T_\theta$ to produce sanitized representations $Z = T_\theta(X)$ and releases them, together with the utility labels $U$, to a downstream consumer. The adversary is any party that observes the released $(Z, U)$ stream and attempts to infer the sensitive attribute $S$. We make the following assumptions explicit:
\begin{itemize}
\item \emph{What the adversary sees.} The adversary observes arbitrarily many sanitized samples $Z^i$ and the corresponding utility labels $U^i$. The adversary does not see $X$ or $S$.
\item \emph{What the adversary knows.} The adversary knows the encoder architecture and may know the encoder parameters $\theta$; privacy must therefore not rely on obscurity of $T_\theta$.
\item \emph{What the adversary does.} The adversary trains an arbitrary post-hoc classifier $\hat{S} = f(Z)$ on a labeled auxiliary dataset and applies it to released $Z$. We model this in Section~\ref{sec:experiments} by re-training a fresh probe classifier on each released representation set.
\item \emph{What the adversary cannot do.} The adversary cannot influence the training of $T_\theta$, query the encoder adaptively, or coerce the data owner. Active and adaptive adversaries are out of scope.
\end{itemize}
The privacy goal is operational: the adversary's classification accuracy on $S$ should be no better than that achievable from the prior $P(S)$ alone. This goal is what the variational objective of Section~\ref{sec:majhead} optimizes for, and what the audit protocol of Section~\ref{sec:experiments} measures. Application scenarios where this threat model fits include health-monitoring systems releasing sensor summaries to clinicians while protecting patient identity, surveillance systems releasing target-specific cues while anonymizing bystanders, and IoT deployments releasing analytics while protecting per-device fingerprints. The federated extension of Section~\ref{sec:distributed} additionally handles the case where the data is partitioned across mutually distrustful clients; we defer that threat model to Section~\ref{sec:distributed}.

\section{THE PROPOSED APPROACH}
\label{sec:majhead}

This section presents the Gaussian Privacy Protector (GPP) framework, establishing rigorous mathematical foundations for privacy-preserving data release. We derive the optimization objective from first principles using information-theoretic analysis and variational inference, then present the complete algorithmic framework.

\subsection{Problem Formulation}

Consider a data owner who possesses high-dimensional data $X$ containing both sensitive attributes $S$ (to be protected) and utility attributes $U$ (to be preserved). The goal is to learn a stochastic mapping that transforms $X$ into sanitized data $Z$ such that:
\begin{enumerate}
    \item \textbf{Privacy:} An adversary cannot reliably infer $S$ from $Z$
    \item \textbf{Utility:} A legitimate user can accurately infer $U$ from $Z$
\end{enumerate}

Formally, we seek an optimal probabilistic mapping $P(Z|X)$ that minimizes information leakage about sensitive attributes while maximizing information preservation about utility attributes:
\begin{equation}
\begin{aligned}
P(Z|X)^{*} &= \argmin_{P(Z|X) \in \mathbb{P}} I(Z;S) \\
&\quad \text{subject to} \quad I(Z;U) \geq \gamma,
\end{aligned}
\label{eq:main_problem}
\end{equation}
where $\gamma > 0$ is the minimum required utility level, $\mathbb{P}$ denotes the set of all valid probabilistic mappings, and $I(\cdot;\cdot)$ represents mutual information.

\subsection{Equivalent Formulation via Conditional Entropy}

Using the fundamental relationship between mutual information and entropy:
\begin{equation}
I(Z;U) = H(U) - H(U|Z),
\label{eq:mi_entropy}
\end{equation}
where $H(\cdot)$ denotes entropy and $H(\cdot|\cdot)$ denotes conditional entropy, the utility constraint in Eq.~\eqref{eq:main_problem} can be rewritten as:
\begin{equation}
H(U) - H(U|Z) \geq \gamma \implies H(U|Z) \leq H(U) - \gamma \triangleq \bar{\gamma}.
\label{eq:utility_constraint}
\end{equation}

Since $H(U)$ is a constant determined by the data distribution, the optimization problem becomes:
\begin{equation}
\begin{aligned}
P(Z|X)^{*} &= \argmin_{P(Z|X) \in \mathbb{P}} I(Z;S) \\
&\quad \text{subject to} \quad H(U|Z) \leq \bar{\gamma}.
\end{aligned}
\label{eq:reformed_problem}
\end{equation}

This formulation reveals the core trade-off: minimizing $I(Z;S)$ removes private information, while constraining $H(U|Z)$ ensures utility information remains recoverable.

\subsection{Lagrangian Relaxation}

To solve the constrained optimization problem in Eq.~\eqref{eq:reformed_problem}, we employ Lagrangian relaxation with multiplier $\beta > 0$:
\begin{equation}
\mathcal{L}(P(Z|X)) = I(Z;S) + \beta \cdot H(U|Z).
\label{eq:lagrangian}
\end{equation}

The optimal mapping minimizes this objective:
\begin{equation}
P(Z|X)^{*} = \argmin_{P(Z|X) \in \mathbb{P}} \left[ I(Z;S) + \beta \cdot H(U|Z) \right].
\label{eq:lagrangian_opt}
\end{equation}

The Lagrange multiplier $\beta$ controls the privacy-utility trade-off:
\begin{itemize}
    \item $\beta \to 0$: Prioritizes privacy (minimizes $I(Z;S)$)
    \item $\beta \to \infty$: Prioritizes utility (minimizes $H(U|Z)$)
\end{itemize}

\subsection{Neural Network Parameterization}

Since searching over all probabilistic mappings $\mathbb{P}$ is intractable, we parameterize the mapping using a neural network encoder $T_\theta: \mathcal{X} \to \mathcal{Z}$ with parameters $\theta \in \Theta$. The optimization becomes:
\begin{equation}
\theta^{*} = \argmin_{\theta \in \Theta} \left[ I(T_\theta(X); S) + \beta \cdot H(U | T_\theta(X)) \right].
\label{eq:neural_objective}
\end{equation}

For notational convenience, we denote $Z = T_\theta(X)$ throughout the remainder of this section.

\subsection{Variational Bounds for Tractable Optimization}

Direct optimization of Eq.~\eqref{eq:neural_objective} is intractable because both $I(Z;S)$ and $H(U|Z)$ require knowledge of the true posterior distributions $P(S|Z)$ and $P(U|Z)$, which are unavailable in closed form. We derive tractable variational bounds for each term.

\subsubsection{Privacy Term: Variational Lower Bound on $I(Z;S)$}

We seek a tractable bound on the mutual information $I(Z;S)$. Starting from the definition of mutual information:
\begin{equation}
I(Z;S) = H(S) - H(S|Z).
\label{eq:mi_definition}
\end{equation}

The conditional entropy can be written as:
\begin{equation}
H(S|Z) = \mathbb{E}_{P(Z,S)}[-\log P(S|Z)] = -\mathbb{E}_{P(Z,S)}[\log P(S|Z)].
\label{eq:cond_entropy_s}
\end{equation}

Substituting Eq.~\eqref{eq:cond_entropy_s} into Eq.~\eqref{eq:mi_definition}:
\begin{equation}
I(Z;S) = H(S) + \mathbb{E}_{P(Z,S)}[\log P(S|Z)].
\label{eq:mi_expanded}
\end{equation}

Since $P(S|Z)$ is intractable, we introduce a variational approximation $Q_\phi(S|Z)$ parameterized by an adversary network with parameters $\phi$.

\begin{theorem}[Variational Lower Bound on Mutual Information]
\label{thm:mi_bound}
For any distribution $Q_\phi(S|Z)$, the mutual information satisfies:
\begin{equation}
I(Z;S) \geq H(S) + \mathbb{E}_{P(Z,S)}[\log Q_\phi(S|Z)],
\label{eq:mi_lower_bound}
\end{equation}
with equality if and only if $Q_\phi(S|Z) = P(S|Z)$ almost everywhere.
\end{theorem}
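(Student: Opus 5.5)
The plan is to compare the exact expression for $I(Z;S)$ in Eq.~\eqref{eq:mi_expanded} with the proposed bound term by term. The gap between them is an expected Kullback--Leibler divergence, and nonnegativity of KL then gives both the inequality and the equality condition. Subtracting the right-hand side of Eq.~\eqref{eq:mi_lower_bound} from Eq.~\eqref{eq:mi_expanded}, the $H(S)$ terms cancel and we are left with
\begin{equation*}
I(Z;S) - \Bigl(H(S) + \mathbb{E}_{P(Z,S)}[\log Q_\phi(S|Z)]\Bigr) = \mathbb{E}_{P(Z,S)}\left[\log \frac{P(S|Z)}{Q_\phi(S|Z)}\right].
\end{equation*}

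Next I will factor $P(Z,S) = P(Z)\,P(S|Z)$ and use the tower property. This rewrites the gap as
\begin{equation*}
\mathbb{E}_{P(Z)}\Bigl[ D_{\mathrm{KL}}\bigl(P(\cdot|Z)\,\|\,Q_\phi(\cdot|Z)\bigr)\Bigr].
\end{equation*}
Gibbs' inequality, which follows from Jensen's inequality applied to the concave $\log$, says each inner KL term is nonnegative. Hence the expectation is nonnegative, and this is exactly Eq.~\eqref{eq:mi_lower_bound}.

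For the equality case, the expectation of a nonnegative quantity is zero if and only if the quantity vanishes $P(Z)$-almost surely. The KL divergence vanishes if and only if its two arguments coincide. So equality holds exactly when $Q_\phi(S|Z) = P(S|Z)$ for $P(Z)$-almost every $z$ and almost every $s$, which is the stated ``almost everywhere'' condition. The converse direction is immediate by substitution.

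The step that needs the most care is measure-theoretic bookkeeping rather than any inequality. I must ensure the expectation $\mathbb{E}_{P(Z,S)}[\log Q_\phi(S|Z)]$ is well defined; it may equal $-\infty$, and then the bound holds trivially. I must also ensure that $P(S|Z)$ is absolutely continuous with respect to $Q_\phi(S|Z)$; if it is not, the KL divergence is $+\infty$ and the inequality is strict. Finally, splitting the difference of expectations requires $H(S)$ and $I(Z;S)$ to be finite. This holds when $S$ is discrete with finitely many values, as it is for all sensitive attributes in this paper. In that case $P(S|Z)$ is a genuine probability mass function for each $z$, so the conditional KL is well defined without extra regularity assumptions. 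I will state this finiteness assumption explicitly at the start of the proof.
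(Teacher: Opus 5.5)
Your proposal is correct and follows essentially the same route as the paper: both write the gap between $I(Z;S)$ and the bound as $\mathbb{E}_{P(Z)}[D_{\mathrm{KL}}(P(S|Z)\,\|\,Q_\phi(S|Z))]$ and conclude from nonnegativity of KL. Your extra bookkeeping goes beyond what the paper writes out. It covers finiteness of $H(S)$ and $I(Z;S)$ for finite discrete $S$, the case $\mathbb{E}[\log Q_\phi]=-\infty$, and the ``only if'' direction of the equality condition, all of which the paper's proof leaves implicit.
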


\begin{proof}
Starting from Eq.~\eqref{eq:mi_expanded}:
\begin{align}
I(Z;S) &= H(S) + \mathbb{E}_{P(Z,S)}[\log P(S|Z)] \label{eq:proof1_step1}\\
&= H(S) + \mathbb{E}_{P(Z,S)}\!\left[\log \tfrac{Q_\phi(S|Z) \cdot P(S|Z)}{Q_\phi(S|Z)}\right] \label{eq:proof1_step2}\\
&= H(S) + \mathbb{E}_{P(Z,S)}[\log Q_\phi(S|Z)] \notag\\
&\quad + \mathbb{E}_{P(Z,S)}\!\left[\log \tfrac{P(S|Z)}{Q_\phi(S|Z)}\right] \label{eq:proof1_step3}\\
&= H(S) + \mathbb{E}_{P(Z,S)}[\log Q_\phi(S|Z)] \notag\\
&\quad + \mathbb{E}_{P(Z)}\!\left[D_{\text{KL}}(P(S|Z) \,\|\, Q_\phi(S|Z))\right]. \label{eq:proof1_step4}
\end{align}
In Eq.~\eqref{eq:proof1_step2}, we multiply and divide by $Q_\phi(S|Z)$. In Eq.~\eqref{eq:proof1_step3}, we use the logarithm property. In Eq.~\eqref{eq:proof1_step4}, we recognize that the last term is the expected KL divergence.

Since the Kullback-Leibler divergence is non-negative, $D_{\text{KL}}(\cdot \| \cdot) \geq 0$, we obtain:
\begin{equation}
I(Z;S) \geq H(S) + \mathbb{E}_{P(Z,S)}[\log Q_\phi(S|Z)].
\label{eq:mi_bound_final}
\end{equation}
The bound is tight when $Q_\phi(S|Z) = P(S|Z)$, which makes the KL divergence zero.
\end{proof}

The tightest lower bound is achieved by maximizing over the variational parameters:
\begin{equation}
I(Z;S) \geq \max_{\phi \in \Phi} \left\{ H(S) + \mathbb{E}_{P(Z,S)}[\log Q_\phi(S|Z)] \right\}.
\label{eq:mi_bound_tight}
\end{equation}

\begin{remark}[Adversary Capacity]
\label{rem:adversary_capacity}
The tightness of the variational bound depends critically on the expressiveness of the adversary network $Q_\phi$. If the adversary has insufficient capacity to approximate the true posterior $P(S|Z)$, the bound becomes loose, potentially giving a false sense of privacy. In practice, we ensure the adversary network has comparable or greater capacity than the encoder network. This design choice guarantees that if the adversary fails to extract sensitive information, it is because such information is genuinely absent from $Z$, not because the adversary is too weak.
\end{remark}

Since $H(S)$ is a constant independent of the optimization variables, we can drop it for optimization purposes:
\begin{equation}
I(Z;S) \geq \max_{\phi \in \Phi} \mathbb{E}_{P(Z,S)}[\log Q_\phi(S|Z)] + \text{const}.
\label{eq:mi_bound_opt}
\end{equation}

\subsubsection{Utility Term: Variational Upper Bound on $H(U|Z)$}

For the conditional entropy term, we derive a variational upper bound.

\begin{theorem}[Variational Upper Bound on Conditional Entropy]
\label{thm:entropy_bound}
For any distribution $Q_\psi(U|Z)$ parameterized by $\psi$, the conditional entropy satisfies:
\begin{equation}
H(U|Z) \leq \mathbb{E}_{P(Z,U)}[-\log Q_\psi(U|Z)],
\label{eq:entropy_upper_bound}
\end{equation}
with equality if and only if $Q_\psi(U|Z) = P(U|Z)$ almost everywhere.
\end{theorem}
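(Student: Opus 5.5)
I will mirror the proof of Theorem~\ref{thm:mi_bound}: write the cross-entropy as the true conditional entropy plus an expected KL divergence, and then use nonnegativity of the KL. Start from
\begin{equation*}
H(U|Z) = -\mathbb{E}_{P(Z,U)}[\log P(U|Z)].
\end{equation*}
Multiply and divide the argument of the logarithm by $Q_\psi(U|Z)$ and split the logarithm:
\begin{align*}
H(U|Z) &= -\mathbb{E}_{P(Z,U)}[\log Q_\psi(U|Z)] \\
&\quad - \mathbb{E}_{P(Z,U)}\!\left[\log \tfrac{P(U|Z)}{Q_\psi(U|Z)}\right].
\end{align*}
Conditioning on $Z$ and using $P(Z,U)=P(Z)P(U|Z)$, the second expectation becomes $\mathbb{E}_{P(Z)}[D_{\text{KL}}(P(U|Z)\,\|\,Q_\psi(U|Z))]$. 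Hence
\begin{align*}
\mathbb{E}_{P(Z,U)}[-\log Q_\psi(U|Z)] &= H(U|Z) \\
&\quad + \mathbb{E}_{P(Z)}\!\left[D_{\text{KL}}(P(U|Z)\,\|\,Q_\psi(U|Z))\right].
\end{align*}
This identity is the heart of the argument. Because the KL divergence is nonnegative (Gibbs' inequality), the inequality \eqref{eq:entropy_upper_bound} follows immediately.

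For the equality case, the bound holds with equality exactly when the expected KL term vanishes. Since the integrand is nonnegative, this happens iff $D_{\text{KL}}(P(U|Z=z)\,\|\,Q_\psi(U|Z=z))=0$ for $P(Z)$-almost every $z$. That in turn holds iff $Q_\psi(\cdot|z)=P(\cdot|z)$ $P(U|Z=z)$-a.e., by the equality case of Gibbs' inequality. Combining the two gives ``$Q_\psi(U|Z)=P(U|Z)$ almost everywhere'' with respect to $P(Z,U)$, which is the sense in which the statement should be read.

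The main obstacle is not the algebra but making the manipulation well defined. I would state the standing conventions explicitly:
\begin{itemize}
\item $H(U|Z)$ is finite.
\item $Q_\psi(\cdot|z)$ is a proper conditional density or mass function.
\item $Q_\psi(\cdot|z)$ is positive wherever $P(U|Z=z)$ is. Otherwise the right-hand side is $+\infty$ and the inequality holds trivially.
\end{itemize}
If $U$ is continuous, $H$ is a differential entropy and may be negative. The KL decomposition still goes through unchanged, provided both expectations exist, so the argument is unaffected.
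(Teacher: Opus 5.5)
Your proposal is correct and follows the paper's proof: both rewrite $H(U|Z)$ as the cross-entropy $\mathbb{E}_{P(Z,U)}[-\log Q_\psi(U|Z)]$ minus the expected divergence $\mathbb{E}_{P(Z)}[D_{\text{KL}}(P(U|Z)\,\|\,Q_\psi(U|Z))]$, then conclude from nonnegativity of the KL divergence. Your version is slightly more complete: the paper only checks the ``if'' direction of the equality case, whereas your observation that the expected KL vanishes iff the conditional KL vanishes for $P(Z)$-almost every $z$ also gives the ``only if'' direction, and your finiteness and support conventions justify splitting the expectation.
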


\begin{proof}
Starting from the definition of conditional entropy:
\begin{align}
H(U|Z) &= \mathbb{E}_{P(Z,U)}[-\log P(U|Z)] \label{eq:proof2_step1}\\
&= \mathbb{E}_{P(Z,U)}\!\left[-\log Q_\psi(U|Z) - \log \tfrac{P(U|Z)}{Q_\psi(U|Z)}\right] \label{eq:proof2_step2}\\
&= \mathbb{E}_{P(Z,U)}[-\log Q_\psi(U|Z)] \notag\\
&\quad - \mathbb{E}_{P(Z)}\!\left[D_{\text{KL}}(P(U|Z) \,\|\, Q_\psi(U|Z))\right]. \label{eq:proof2_step3}
\end{align}
In Eq.~\eqref{eq:proof2_step2}, we add and subtract $\log Q_\psi(U|Z)$. In Eq.~\eqref{eq:proof2_step3}, we recognize the expected KL divergence.

Since $D_{\text{KL}}(\cdot \| \cdot) \geq 0$, we have:
\begin{equation}
H(U|Z) \leq \mathbb{E}_{P(Z,U)}[-\log Q_\psi(U|Z)].
\label{eq:entropy_bound_final}
\end{equation}
Equality holds when $Q_\psi(U|Z) = P(U|Z)$.
\end{proof}

The tightest upper bound is obtained by minimizing over the variational parameters:
\begin{equation}
H(U|Z) \leq \min_{\psi \in \Psi} \mathbb{E}_{P(Z,U)}[-\log Q_\psi(U|Z)].
\label{eq:entropy_bound_tight}
\end{equation}

\begin{remark}
Note the critical difference in optimization directions: the privacy bound requires \emph{maximization} over $\phi$ (adversarial), while the utility bound requires \emph{minimization} over $\psi$ (cooperative). This asymmetry is fundamental to the adversarial training framework.
\end{remark}

\subsection{Combined Variational Objective}

Substituting the variational bounds from Theorems~\ref{thm:mi_bound} and~\ref{thm:entropy_bound} into the Lagrangian objective Eq.~\eqref{eq:lagrangian}, we define the \emph{variational surrogate}
\begin{equation}
\begin{aligned}
\widetilde{\mathcal{L}}(\theta;\phi,\psi) \triangleq{}& H(S) + \mathbb{E}_{P(Z,S)}[\log Q_\phi(S|Z)] \\
&+ \beta \cdot \mathbb{E}_{P(Z,U)}[-\log Q_\psi(U|Z)],
\end{aligned}
\label{eq:variational_surrogate}
\end{equation}
which, by Theorems~\ref{thm:mi_bound} and~\ref{thm:entropy_bound}, satisfies the saddle-point identity
\begin{equation}
\mathcal{L}(\theta) \;=\; I(Z;S) + \beta \cdot H(U|Z) \;=\; \max_{\phi \in \Phi} \, \min_{\psi \in \Psi} \, \widetilde{\mathcal{L}}(\theta;\phi,\psi),
\label{eq:combined_bound}
\end{equation}
with equality attained at $Q_{\phi^*}(S|Z)=P(S|Z)$ and $Q_{\psi^*}(U|Z)=P(U|Z)$.

To minimize the original objective $\mathcal{L}(\theta)$, we therefore solve the saddle-point problem on the right-hand side of Eq.~\eqref{eq:combined_bound}: the adversary $\phi$ maximizes its inner term to tighten the lower bound on $I(Z;S)$, the utility classifier $\psi$ minimizes its inner term to tighten the upper bound on $H(U|Z)$, and the encoder $\theta$ then minimizes the resulting surrogate. When both inner optimizations are attained, the surrogate equals the true information-theoretic objective, and the encoder's update direction coincides with the gradient of $\mathcal{L}(\theta)$. This is analogous to the generator-discriminator dynamics in Generative Adversarial Networks \cite{goodfellow2014generative}. The complete optimization problem becomes:
\begin{equation}
\boxed{
\theta^{*} = \argmin_{\theta \in \Theta} \left\{ \max_{\phi \in \Phi} \mathbb{E}[\log Q_\phi(S|Z)] + \beta \cdot \min_{\psi \in \Psi} \mathbb{E}[-\log Q_\psi(U|Z)] \right\}
}
\label{eq:main_objective}
\end{equation}

This reveals a \textbf{hybrid minimax-minimization} structure:
\begin{itemize}
    \item \textbf{Privacy (adversarial game):} $\min_\theta \max_\phi$ --- the encoder $T_\theta$ tries to fool the adversary $Q_\phi$
    \item \textbf{Utility (cooperative game):} $\min_\theta \min_\psi$ --- the encoder $T_\theta$ helps the utility classifier $Q_\psi$
\end{itemize}

In practice, directly solving this nested optimization is challenging. We adopt an alternating optimization approach where we iteratively: (1) fix $\theta$ and optimize $\phi$ and $\psi$ to tighten the variational bounds, then (2) fix $\phi$ and $\psi$ and optimize $\theta$ to minimize the surrogate objective. This approach is analogous to training procedures used in Generative Adversarial Networks (GANs) \cite{goodfellow2014generative} and has been shown to be effective for similar min-max problems.

\subsection{Cross-Entropy Formulation}

Recognizing that the expectations in Eq.~\eqref{eq:main_objective} correspond to cross-entropy losses, we can rewrite the objective in a more practical form.

\begin{definition}[Cross-Entropy Loss]
For a true distribution $p$ and a predicted distribution $q$, the cross-entropy is defined as:
\begin{equation}
\text{CE}(p, q) = \mathbb{E}_{p}[-\log q].
\label{eq:ce_definition}
\end{equation}
\end{definition}

We can relate the terms in Eq.~\eqref{eq:main_objective} to cross-entropy as follows:
\begin{itemize}
    \item Utility term: $\mathbb{E}_{P(Z,U)}[-\log Q_\psi(U|Z)] = \text{CE}(U, Q_\psi(Z))$
    \item Privacy term: $\mathbb{E}_{P(Z,S)}[\log Q_\phi(S|Z)] = -\text{CE}(S, Q_\phi(Z))$
\end{itemize}

Substituting these into Eq.~\eqref{eq:main_objective} and using $\max_\phi \mathbb{E}[\log Q_\phi(S|Z)] = -\min_\phi \text{CE}(S, Q_\phi(Z))$:
\begin{equation}
\theta^{*} = \argmin_{\theta} \left\{ -\min_{\phi} \text{CE}(S, Q_\phi(Z)) + \beta \cdot \min_{\psi} \text{CE}(U, Q_\psi(Z)) \right\}.
\label{eq:ce_objective_intermediate}
\end{equation}

Reordering terms for clarity:
\begin{equation}
\theta^{*} = \argmin_{\theta} \left\{ \beta \cdot \min_{\psi} \text{CE}(U, Q_\psi(Z)) - \min_{\phi} \text{CE}(S, Q_\phi(Z)) \right\}.
\label{eq:ce_objective}
\end{equation}
The structure is now transparent: the encoder minimizes a weighted combination of the utility classifier's best-achievable CE (cooperative term, with weight $+\beta$) and the negative of the adversary's best-achievable CE (adversarial term, with weight $-1$). Pushing up the adversary's best-achievable CE is precisely what makes the most powerful adversary fail.

For multiple utility attributes $\{u_1, \ldots, u_{d_u}\}$ and sensitive attributes $\{s_1, \ldots, s_{d_s}\}$, we assume conditional independence given $Z$:
\begin{equation}
P(U|Z) = \prod_{j=1}^{d_u} P(u_j|Z), \quad P(S|Z) = \prod_{j=1}^{d_s} P(s_j|Z).
\label{eq:conditional_independence}
\end{equation}
This factorization assumption allows us to decompose the multi-attribute problem into independent per-attribute classifiers, which simplifies optimization and scales linearly with the number of attributes. When attributes are correlated (e.g., age and wrinkles in facial images), this assumption is approximate, but empirical results demonstrate that the framework remains effective. Under this assumption, the objective becomes:
\begin{equation}
\boxed{
\mathcal{L}_{\text{GPP}}(\theta, \psi, \phi) = \beta \sum_{j=1}^{d_u} \text{CE}(u_j, Q_{\psi_j}(Z)) - \sum_{j=1}^{d_s} \text{CE}(s_j, Q_{\phi_j}(Z))
}
\label{eq:final_objective}
\end{equation}

The optimization proceeds as follows:
\begin{itemize}
    \item $\theta$ is minimized with respect to $\mathcal{L}_{\text{GPP}}$: The encoder learns to protect privacy (increase adversary's cross-entropy) and preserve utility (decrease utility classifier's cross-entropy)
    \item $\psi_j$ is minimized with respect to $\text{CE}(u_j, Q_{\psi_j})$: Utility classifiers learn to extract utility information
    \item $\phi_j$ is minimized with respect to $\text{CE}(s_j, Q_{\phi_j})$: Adversary classifiers learn to extract private information
\end{itemize}

The key insight is that the encoder $T_\theta$ faces an \textbf{adversarial relationship} with the privacy classifiers (their loss is subtracted in $\mathcal{L}_{\text{GPP}}$, so minimizing $\mathcal{L}_{\text{GPP}}$ increases their loss) and a \textbf{cooperative relationship} with the utility classifiers (their loss is added with positive weight $\beta$, so minimizing $\mathcal{L}_{\text{GPP}}$ decreases their loss).

\subsection{Gaussian Latent Space Regularization}

To ensure the latent representation $Z$ has well-behaved distributional properties suitable for privacy protection, we impose a Gaussian prior using the variational autoencoder (VAE) framework \cite{kingma2013auto}.

\subsubsection{Reparameterization Trick}

The encoder network $T_\theta$ outputs parameters of a Gaussian distribution rather than deterministic values:
\begin{equation}
T_\theta(X) \to (\mu_\theta(X), \Sigma_\theta(X)),
\label{eq:encoder_output}
\end{equation}
where $\mu_\theta(X) \in \mathbb{R}^{d_z}$ is the mean vector and $\Sigma_\theta(X) = L_\theta(X) L_\theta(X)^\top$ is the covariance matrix, with $L_\theta$ being the Cholesky factor.

The latent representation is sampled using the reparameterization trick \cite{kingma2013auto}:
\begin{equation}
Z = \mu_\theta(X) + L_\theta(X) \odot \epsilon, \quad \epsilon \sim \mathcal{N}(0, I),
\label{eq:reparam}
\end{equation}
where $\odot$ denotes element-wise multiplication. This reparameterization allows gradients to flow through the sampling operation, enabling end-to-end training via backpropagation.

\subsubsection{KL Divergence Regularization}

To encourage the latent distribution to remain close to a standard Gaussian prior, we add a KL divergence penalty:
\begin{equation}
\mathcal{L}_{\text{KL}}(\theta) = D_{\text{KL}}\left( \mathcal{N}(\mu_\theta(X), \Sigma_\theta(X)) \| \mathcal{N}(0, I) \right).
\label{eq:kl_penalty}
\end{equation}

For diagonal covariance $\Sigma = \text{diag}(\sigma_1^2, \ldots, \sigma_{d_z}^2)$, this has the closed-form expression:
\begin{equation}
\mathcal{L}_{\text{KL}}(\theta) = \frac{1}{2} \sum_{i=1}^{d_z} \left( \mu_i^2 + \sigma_i^2 - \log \sigma_i^2 - 1 \right).
\label{eq:kl_closed_form}
\end{equation}

The Gaussian regularization serves multiple purposes: (1) it prevents the encoder from producing degenerate representations, (2) it provides a smooth latent space that generalizes better to unseen data, and (3) it adds controlled stochasticity that can enhance privacy by making the mapping from $X$ to $Z$ non-deterministic.

\subsection{Complete GPP Objective Function}

Combining the privacy-utility objective (Eq.~\ref{eq:final_objective}) with Gaussian regularization yields the complete GPP loss:
\begin{equation}
\boxed{
\begin{aligned}
\mathcal{L}_{\text{GPP}}(\theta, \psi, \phi) = &\underbrace{\beta \sum_{j=1}^{d_u} \text{CE}(u_j, Q_{\psi_j}(Z))}_{\text{Utility Loss (minimize)}} \\
&- \underbrace{\sum_{j=1}^{d_s} \text{CE}(s_j, Q_{\phi_j}(Z))}_{\text{Privacy Loss (adversarial)}} \\
&+ \underbrace{\lambda \cdot D_{\text{KL}}(\mathcal{N}(\mu_\theta, \Sigma_\theta) \| \mathcal{N}(0, I))}_{\text{Gaussian Regularization}}
\end{aligned}
}
\label{eq:complete_objective}
\end{equation}

where $\lambda > 0$ controls the strength of the Gaussian prior regularization.

\subsection{Optimization Strategy}

The optimization involves three sets of parameters with different objectives, summarized in Table~\ref{tab:param_roles}.

\begin{table}[!t]
\centering
\caption{Parameter Roles in GPP Optimization}
\label{tab:param_roles}
\begin{tabular}{lll}
\hline
\textbf{Parameters} & \textbf{Network} & \textbf{Objective} \\
\hline
$\theta$ & Encoder $T_\theta$ & Minimize $\mathcal{L}_{\text{GPP}}$ \\
$\psi = \{\psi_1, \ldots, \psi_{d_u}\}$ & Utility classifiers & Minimize CE$(u_j, Q_{\psi_j})$ \\
$\phi = \{\phi_1, \ldots, \phi_{d_s}\}$ & Adversary classifiers & Minimize CE$(s_j, Q_{\phi_j})$ \\
\hline
\end{tabular}
\end{table}

\begin{remark}[Adversarial Dynamics]
Although all networks minimize their respective objectives, the adversarial relationship emerges from the structure of $\mathcal{L}_{\text{GPP}}$. When the encoder $\theta$ minimizes $\mathcal{L}_{\text{GPP}}$:
\begin{itemize}
    \item The utility CE terms have positive coefficients ($+\beta$), so minimizing $\mathcal{L}_{\text{GPP}}$ encourages $\theta$ to \emph{reduce} utility classification loss (cooperative).
    \item The privacy CE terms have negative coefficients ($-1$), so minimizing $\mathcal{L}_{\text{GPP}}$ encourages $\theta$ to \emph{increase} adversary classification loss (adversarial).
\end{itemize}
Meanwhile, the adversary networks $\phi_j$ independently minimize their own CE losses, creating the classic min-max game: the encoder tries to make the adversary fail, while the adversary tries to succeed.
\end{remark}

\subsubsection{Gradient Analysis}

For the encoder parameters $\theta$, the gradient of the complete objective is:
\begin{equation}
\begin{aligned}
\nabla_\theta \mathcal{L}_{\text{GPP}} ={}& \beta \sum_{j=1}^{d_u} \nabla_\theta \text{CE}(u_j, Q_{\psi_j}(Z)) \\
&- \sum_{j=1}^{d_s} \nabla_\theta \text{CE}(s_j, Q_{\phi_j}(Z)) \\
&+ \lambda \nabla_\theta \mathcal{L}_{\text{KL}}.
\end{aligned}
\label{eq:gradient}
\end{equation}
The negative sign before the privacy term means:
\begin{itemize}
    \item When the adversary successfully classifies $s_j$ (low CE), the gradient pushes $\theta$ to \emph{increase} the adversary's loss
    \item This creates the adversarial dynamic where the encoder learns to confuse the adversary
\end{itemize}

\subsection{Training Algorithm}

Algorithm~\ref{alg:gpp_training} presents the complete GPP training procedure with alternating optimization.

\begin{algorithm}[t]
\caption{Gaussian Privacy Protector (GPP) Training}
\label{alg:gpp_training}
\begin{algorithmic}[1]
\Require Dataset $\mathcal{D} = \{(\mathbf{x}^i, \mathbf{u}^i, \mathbf{s}^i)\}_{i=1}^{n}$; batch size $b$; adversary/utility update steps $k$; trade-off parameter $\beta$; regularization weight $\lambda$; learning rate $\alpha$
\Ensure Trained encoder $T_\theta$
\State Initialize $T_\theta$, $\{Q_{\psi_j}\}_{j=1}^{d_u}$, $\{Q_{\phi_j}\}_{j=1}^{d_s}$ with random weights
\While{not converged}
    \State \textcolor{blue}{\texttt{// Phase 1: Train utility and adversary classifiers}}
    \For{$t = 1$ to $k$}
        \State Sample mini-batch $\mathcal{B} = \{(\mathbf{x}^i, \mathbf{u}^i, \mathbf{s}^i)\}_{i=1}^{b}$ from $\mathcal{D}$
        \State $\boldsymbol{\mu}, \boldsymbol{L} \gets T_\theta(\{\mathbf{x}^i\})$ \Comment{Encoder forward pass}
        \State Sample $\boldsymbol{\epsilon} \sim \mathcal{N}(0, I)$
        \State $\{\mathbf{z}^i\} \gets \boldsymbol{\mu} + \boldsymbol{L} \odot \boldsymbol{\epsilon}$ \Comment{Reparameterization}
        \For{$j = 1$ to $d_u$} \Comment{Update utility classifiers}
            \State $\mathcal{L}_{\psi_j} \gets \frac{1}{b} \sum_{i=1}^{b} \text{CE}(u_j^i, Q_{\psi_j}(\mathbf{z}^i))$
            \State $\psi_j \gets \psi_j - \alpha \cdot \text{Adam}(\nabla_{\psi_j} \mathcal{L}_{\psi_j})$
        \EndFor
        \For{$j = 1$ to $d_s$} \Comment{Update adversary classifiers}
            \State $\mathcal{L}_{\phi_j} \gets \frac{1}{b} \sum_{i=1}^{b} \text{CE}(s_j^i, Q_{\phi_j}(\mathbf{z}^i))$
            \State $\phi_j \gets \phi_j - \alpha \cdot \text{Adam}(\nabla_{\phi_j} \mathcal{L}_{\phi_j})$
        \EndFor
    \EndFor
    \State \textcolor{blue}{\texttt{// Phase 2: Train encoder}}
    \State Sample mini-batch $\mathcal{B} = \{(\mathbf{x}^i, \mathbf{u}^i, \mathbf{s}^i)\}_{i=1}^{b}$ from $\mathcal{D}$
    \State $\boldsymbol{\mu}, \boldsymbol{L} \gets T_\theta(\{\mathbf{x}^i\})$
    \State Sample $\boldsymbol{\epsilon} \sim \mathcal{N}(0, I)$
    \State $\{\mathbf{z}^i\} \gets \boldsymbol{\mu} + \boldsymbol{L} \odot \boldsymbol{\epsilon}$
    \State Compute KL divergence: $\mathcal{L}_{\text{KL}} \gets \frac{1}{2} \sum_{l=1}^{d_z} \left( \mu_l^2 + \sigma_l^2 - \log \sigma_l^2 - 1 \right)$
    \State Compute encoder loss:
    \begin{equation*}
    \mathcal{L}_\theta \gets \frac{1}{b} \sum_{i=1}^{b} \left[ \beta \sum_{j=1}^{d_u} \text{CE}(u_j^i, Q_{\psi_j}(\mathbf{z}^i)) - \sum_{j=1}^{d_s} \text{CE}(s_j^i, Q_{\phi_j}(\mathbf{z}^i)) \right] + \lambda \cdot \mathcal{L}_{\text{KL}}
    \end{equation*}
    \State $\theta \gets \theta - \alpha \cdot \text{Adam}(\nabla_\theta \mathcal{L}_\theta)$
\EndWhile
\State \Return $T_\theta$
\end{algorithmic}
\end{algorithm}

\subsection{Theoretical Analysis}

We now analyze the theoretical properties of the GPP framework.

\subsubsection{Privacy-Utility Trade-off}

The fundamental trade-off between privacy and utility in the GPP framework can be understood through the lens of rate-distortion theory and the information bottleneck principle \cite{tishby2000information}.

\begin{remark}[Privacy-Utility Trade-off]
\label{rem:tradeoff}
Let $\theta^*$ be the optimal encoder parameters obtained by minimizing Eq.~\eqref{eq:complete_objective}, and let $Z^* = T_{\theta^*}(X)$. The achievable privacy-utility pairs $(I(Z^*;S), I(Z^*;U))$ lie on a trade-off curve parameterized by $\beta$:
\begin{itemize}
    \item As $\beta \to 0$: $I(Z^*;S) \to 0$ (maximum privacy), but $I(Z^*;U)$ may decrease
    \item As $\beta \to \infty$: $I(Z^*;U) \to I(X;U)$ (maximum utility), but $I(Z^*;S)$ may increase
\end{itemize}
The bottleneck dimension $d_z$ further constrains this trade-off: smaller $d_z$ limits the total information $Z$ can carry about $X$, forcing a more stringent trade-off.
\end{remark}

\subsubsection{Adversary Performance Bound}

We can bound the adversary's performance using Fano's inequality \cite{cover2012elements}.

\begin{proposition}[Adversary Accuracy Bound]
\label{prop:adversary_bound}
Let $S$ be a discrete sensitive attribute with $|\mathcal{S}|$ classes, and assume $S$ is uniformly distributed so that $H(S) = \log |\mathcal{S}|$. For any estimator $\hat{S}(Z)$ of $S$ based on $Z$, the probability of error is lower bounded by:
\begin{equation}
P(\hat{S}(Z) \neq S) \geq \frac{H(S|Z) - 1}{\log |\mathcal{S}|} = \frac{H(S) - I(Z;S) - 1}{\log |\mathcal{S}|}.
\label{eq:fano_bound}
\end{equation}
Consequently, if GPP achieves $I(Z^*;S) \approx 0$, then the error probability satisfies:
\begin{equation}
P(\hat{S}(Z^*) \neq S) \geq \frac{\log |\mathcal{S}| - 1}{\log |\mathcal{S}|} = 1 - \frac{1}{\log |\mathcal{S}|}
\label{eq:adversary_bound}
\end{equation}
which approaches $1$ (i.e., the adversary fails almost surely) for large $|\mathcal{S}|$.
\end{proposition}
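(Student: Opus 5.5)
The plan is to apply Fano's inequality to the Markov chain $S \to Z \to \hat{S}(Z)$ and then substitute the uniform-prior assumption. Any estimator $\hat{S}$ is a (possibly randomized) function of $Z$ alone, so $S \to Z \to \hat{S}$ holds. This is the only structural fact needed. The same reliance-only-on-$Z$ structure underlies \eqref{eq:bottleneck_assumption}, but here it holds automatically because $\hat{S}$ is built from $Z$.

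The first step invokes Fano's inequality \cite{cover2012elements} in the form
\[
H(S \mid \hat{S}) \le h_b(P_e) + P_e \log(|\mathcal{S}|-1),
\]
with $P_e = P(\hat{S}\neq S)$ and $h_b$ the binary entropy. I then weaken this using $h_b(P_e)\le 1$, with logarithms in bits so that the constant $1$ is correct, and $\log(|\mathcal{S}|-1)\le\log|\mathcal{S}|$. This gives $H(S\mid\hat{S}) \le 1 + P_e\log|\mathcal{S}|$.

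The second step passes from $\hat{S}$ to $Z$ by the data-processing inequality. Since $I(S;\hat{S})\le I(S;Z)$, we have $H(S\mid\hat{S})\ge H(S\mid Z)$. Combining the two steps and rearranging yields
\[
P_e \ge \frac{H(S\mid Z)-1}{\log|\mathcal{S}|}.
\]
The identity $H(S\mid Z)=H(S)-I(Z;S)$ is \eqref{eq:mi_definition}, which gives the second form of \eqref{eq:fano_bound}. The uniformity hypothesis is used only in the next step, to replace $H(S)$ by $\log|\mathcal{S}|$.

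For the consequence \eqref{eq:adversary_bound}, I substitute $I(Z^*;S)=0$. The bound is continuous in $I(Z^*;S)$, so if the mutual information is only approximately zero, the bound holds up to an additive $I(Z^*;S)/\log|\mathcal{S}|$. This is how I would make the ``$\approx$'' in the statement precise. The limit $1-1/\log|\mathcal{S}|\to1$ as $|\mathcal{S}|\to\infty$ is then immediate.

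The main obstacle is being careful about the constant and the base of the logarithm, and about what the bound actually says. The bound is vacuous for $|\mathcal{S}|\le 2$, because the right-hand side is $\le 0$ in bits. It is also not tight: the true minimum error under independence is $1-1/|\mathcal{S}|$. I would state explicitly that $\log$ is base $2$ (or else replace $1$ by $\log 2$). I would also note that the sharper statement $P_e\ge 1-1/|\mathcal{S}|$ when $I(Z;S)=0$ follows directly from independence: when $S$ is independent of $Z$, every estimator succeeds with probability at most $\max_s P(s)=1/|\mathcal{S}|$. This confirms that the claimed inequality is valid, though weaker.
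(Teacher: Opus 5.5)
Your proof is correct and is essentially the paper's argument: Fano's inequality, then the weakenings $H_b(P_e)\le 1$ and $\log(|\mathcal{S}|-1)\le\log|\mathcal{S}|$, then $H(S\mid Z)=H(S)-I(Z;S)$. The only difference is that you make explicit the data-processing step $H(S\mid\hat S)\ge H(S\mid Z)$, which the paper absorbs into its citation of Fano; also, base $2$ is not actually needed for the constant $1$, since in nats $H_b(P_e)\le\ln 2<1$ too, and the paper works in nats (its $0.71$ figure for $|\mathcal{S}|=30$ is $1-1/\ln 30$).
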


\begin{proof}
By Fano's inequality \cite{cover2012elements}, for any estimator $\hat{S}$ of $S$ based on $Z$:
\begin{equation}
H(S|Z) \leq H_b(P_e) + P_e \log(|\mathcal{S}| - 1),
\label{eq:fano_ineq}
\end{equation}
where $P_e = P(\hat{S}(Z) \neq S)$ is the error probability and $H_b(\cdot)$ is the binary entropy function. Using $H_b(P_e) \leq 1$ and $\log(|\mathcal{S}| - 1) \leq \log |\mathcal{S}|$, we obtain:
\begin{equation}
H(S|Z) \leq 1 + P_e \log |\mathcal{S}|.
\end{equation}
Rearranging and substituting $H(S|Z) = H(S) - I(Z;S)$ yields:
\begin{equation}
P_e \geq \frac{H(S) - I(Z;S) - 1}{\log |\mathcal{S}|}.
\end{equation}
When $I(Z;S) \to 0$, we have $H(S|Z) \to H(S)$, meaning $Z$ provides negligible information about $S$, and any estimator's error probability approaches $(H(S) - 1)/\log|\mathcal{S}|$.
\end{proof}

\begin{remark}[Informativeness regime of Proposition~\ref{prop:adversary_bound}]
\label{rem:fano_regime}
The bound in Eq.~\eqref{eq:adversary_bound} is non-trivial only when $H(S) > 1$ (in nats), i.e., when the sensitive attribute has many classes. For binary $S$ (e.g., the even/odd label on MNIST or the gender label on CelebA), $H(S) \le \log 2 < 1$, and the right-hand side of Eq.~\eqref{eq:fano_bound} is non-positive, rendering the bound vacuous. The bound is therefore most informative for the HAPT-Recognition setting ($|\mathcal{S}| = 30$ subject identities), where it certifies $P_e \gtrsim 1 - 1/\log|\mathcal{S}| \approx 0.71$ as $I(Z^*;S) \to 0$. For binary sensitive attributes, the empirical adversary AUC approaching $0.5$ in our experiments serves as the operational privacy guarantee, complementing rather than relying on Fano's bound.
\end{remark}

\subsection{Connection to Related Frameworks}

The GPP framework unifies and extends several existing information-theoretic approaches:

\begin{itemize}
    \item \textbf{Information Bottleneck} \cite{tishby2000information}: Setting $S = X$ reduces GPP to the standard information bottleneck, which compresses $X$ while preserving information about $U$. The objective becomes $\min I(Z;X) - \beta I(Z;U)$.
    
    \item \textbf{Privacy Funnel} \cite{makhdoumi2014information}: Setting $U = X$ reduces GPP to the privacy funnel, which reveals information about $X$ while hiding $S$. The objective becomes $\min I(Z;S) - \beta I(Z;X)$.
    
    \item \textbf{Adversarial Representation Learning} \cite{edwards2015censoring}: The adversarial component of GPP mirrors domain-adversarial training, but GPP provides explicit information-theoretic objectives and variational bounds.
    
    \item \textbf{Variational Fair Autoencoder} \cite{louizos2015variational}: Similar to GPP, this approach uses variational inference for fair representations, but GPP additionally incorporates the utility preservation objective and provides a complete distributed learning framework.
\end{itemize}

The key novelties of GPP compared to these frameworks are:
GPP's relationship to these frameworks is one of integration and extension rather than wholesale replacement. The variational bounds we use are essentially those of \cite{edwards2015censoring} and \cite{louizos2015variational}; the Gaussian latent regularizer is the standard VAE construction; the saddle-point training procedure mirrors the adversarial dynamics of \cite{goodfellow2014generative}. What is new in this paper is the distributed extension presented in Section~\ref{sec:distributed}: specifically, (i) the architectural split in which adversary classifiers are kept private to each client while only utility classifiers are aggregated, (ii) the substitution of sanitized-representation transmission for gradient transmission as a response to the gradient-leakage attack vector \cite{lyu2020threats}, and (iii) the compositional privacy bound of Proposition~\ref{prop:distributed_privacy} that this architecture admits under standard IID and honest-aggregator assumptions. The centralized GPP framework (Algorithm~\ref{alg:gpp_training}) is presented primarily as the per-client building block of the distributed framework rather than as an independent contribution.

\section{System Model for Distributed Datasets}
\label{sec:distributed}

This section extends the GPP framework to distributed settings, addressing privacy concerns in federated learning (FL) environments where multiple parties collaboratively train models without sharing raw data.

\subsection{Motivation}

Federated learning has emerged as an effective paradigm for collaborative model training across distributed IoT devices, with a central server coordinating the learning process \cite{cheng2021fine}. The standard FL process consists of four main steps \cite{yang2019federated,alazab2021federated}:

\begin{enumerate}
    \item \textbf{Client Selection:} The aggregator (server) selects participating clients either randomly or using a selection algorithm based on criteria such as data quality or computational resources.
    
    \item \textbf{Parameter Distribution:} The server distributes the current global model parameters to all selected clients.
    
    \item \textbf{Local Training:} Each client trains the model locally using their private data, computing parameter updates (gradients).
    
    \item \textbf{Model Aggregation:} Clients send their updated parameters to the central server, which aggregates them (e.g., via averaging) to update the global model. This process repeats iteratively until convergence.
\end{enumerate}

Despite FL's benefits, significant privacy concerns remain. Many individuals and organizations are hesitant to participate in distributed learning environments, particularly in sensitive domains such as healthcare \cite{xu2021federated}, finance \cite{li2020preserving}, and wireless communications \cite{niknam2020federated}. Even though raw data remains local, several privacy risks persist:

\begin{itemize}
    \item \textbf{Gradient Leakage:} Recent work has demonstrated that raw training data can be reconstructed from shared gradients \cite{lyu2020threats}.
    \item \textbf{Model Inversion Attacks:} Adversaries can infer sensitive attributes from model parameters \cite{fredrikson2015model}.
    \item \textbf{Membership Inference:} Attackers can determine whether specific data points were used in training \cite{shokri2017membership}.
    \item \textbf{Untrusted Aggregators:} The central server may be compromised or malicious \cite{ma2020safeguarding, hou2021mitigating}.
\end{itemize}

Figure~\ref{fig:fl_threat} illustrates the threat model in a typical FL system, where adversaries may exist at the central server, communication channels, or as malicious participants.

To address these challenges, we propose \textbf{Distributed GPP}, which enhances privacy in FL systems by having each client sanitize their data locally using a GPP encoder before sharing. Instead of sharing raw data or gradients computed on raw data, clients share only sanitized representations that preserve utility information while removing sensitive attributes.

\begin{figure}[!t]
    \centering
    \includegraphics[width=0.8\columnwidth]{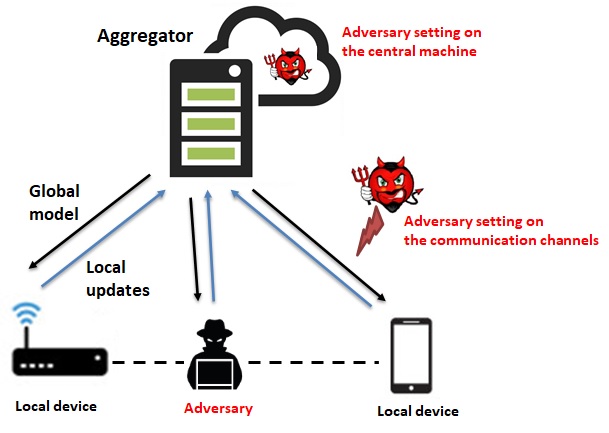}
    \caption{Threat model in federated learning systems. Adversaries may attempt to extract sensitive information at multiple points: the central aggregator, communication channels, or through malicious participants. Distributed GPP addresses these threats by sanitizing data locally before any information leaves the client.}
    \label{fig:fl_threat}
\end{figure}

\subsection{Problem Statement}

Consider a distributed system with $t$ clients (GPP nodes), where each client $m \in \{1, \ldots, t\}$ possesses a local dataset:
\begin{equation}
\mathcal{D}^{m} = \{(\mathbf{x}^{i}_{m}, \mathbf{u}^{i}_{m}, \mathbf{s}^{i}_{m})\}_{i=1}^{n_m},
\label{eq:local_dataset}
\end{equation}
where $n_m$ is the number of samples at client $m$, $\mathbf{x}^{i}_{m} \in \mathbb{R}^{d_x}$ is the raw data, $\mathbf{u}^{i}_{m} \in \mathbb{R}^{d_u}$ represents utility attributes, and $\mathbf{s}^{i}_{m} \in \mathbb{R}^{d_s}$ represents sensitive attributes.

The complete distributed dataset is:
\begin{equation}
\mathcal{D} = \bigcup_{m=1}^{t} \mathcal{D}^{m}, \quad \text{with total size } N = \sum_{m=1}^{t} n_m.
\label{eq:full_dataset}
\end{equation}

\subsubsection{Data Partitioning}

Distributed data can be partitioned in two ways:

\begin{itemize}
    \item \textbf{Horizontal Partitioning:} Each client owns the same set of attributes for different record sets. Formally, all clients share the same feature space $\mathcal{X}$, attribute spaces $\mathcal{U}$ and $\mathcal{S}$, but have disjoint sample sets: $\mathcal{D}^{m} \cap \mathcal{D}^{m'} = \emptyset$ for $m \neq m'$.
    
    \item \textbf{Vertical Partitioning:} Each client owns different attributes for the same set of records. Clients share the same sample identifiers but have disjoint feature spaces.
\end{itemize}

In this work, we focus on \textbf{horizontal partitioning}, which is the most common scenario in federated learning applications such as mobile devices, hospitals, or IoT sensors, where each client has complete feature vectors for their local samples.

\subsubsection{Distributed Objective}

The goal of Distributed GPP is to jointly learn:
\begin{enumerate}
    \item Local encoders $\{T_{\theta_m}\}_{m=1}^{t}$ at each client that sanitize raw data
    \item Shared utility classifiers $\{Q_{\psi_j}\}_{j=1}^{d_u}$ at the aggregator
    \item Adversary classifiers $\{Q_{\phi_j}\}_{j=1}^{d_s}$ (for training purposes)
\end{enumerate}

The distributed optimization objective extends Eq.~\eqref{eq:complete_objective}:
\begin{equation}
\boxed{
\begin{aligned}
\min_{\{\theta_m\}, \{\psi_j\}} \max_{\{\phi_j\}} \quad & \sum_{m=1}^{t} \frac{n_m}{N} \Bigg[ \beta_m \sum_{j=1}^{d_u} \text{CE}(u_j^{(m)}, Q_{\psi_j}(Z^{(m)})) \\
& - \sum_{j=1}^{d_s} \text{CE}(s_j^{(m)}, Q_{\phi_j}(Z^{(m)})) + \lambda_m \mathcal{L}_{\text{KL}}^{(m)} \Bigg]
\end{aligned}
}
\label{eq:distributed_objective}
\end{equation}
where $Z^{(m)} = T_{\theta_m}(X^{(m)})$ is the sanitized representation from client $m$, and $\beta_m$, $\lambda_m$ are client-specific trade-off parameters.

\subsection{Distributed Architecture}

Figure~\ref{fig:distributed_gpp} illustrates the Distributed GPP architecture, which consists of three types of components:

\begin{enumerate}
    \item \textbf{Local GPP Encoders} ($t$ clients): Each client $m$ maintains a local encoder $T_{\theta_m}$ that transforms raw data $\mathbf{x}^{i}_{m}$ into sanitized representations $\mathbf{z}^{i}_{m}$. Sensitive labels $\mathbf{s}^{i}_{m}$ never leave the client.
    
    \item \textbf{Central Aggregator}: Hosts the shared utility classifiers $\{Q_{\psi_j}\}_{j=1}^{d_u}$ that are trained on sanitized data from all clients to perform utility tasks.
    
    \item \textbf{Adversary Classifiers}: $\{Q_{\phi_j}\}_{j=1}^{d_s}$ are used during training to ensure the encoders successfully remove sensitive information. These can be located at the aggregator (for centralized adversarial training) or distributed.
\end{enumerate}

\begin{figure}[!htbp]
    \centering 
    \includegraphics[width=\columnwidth]{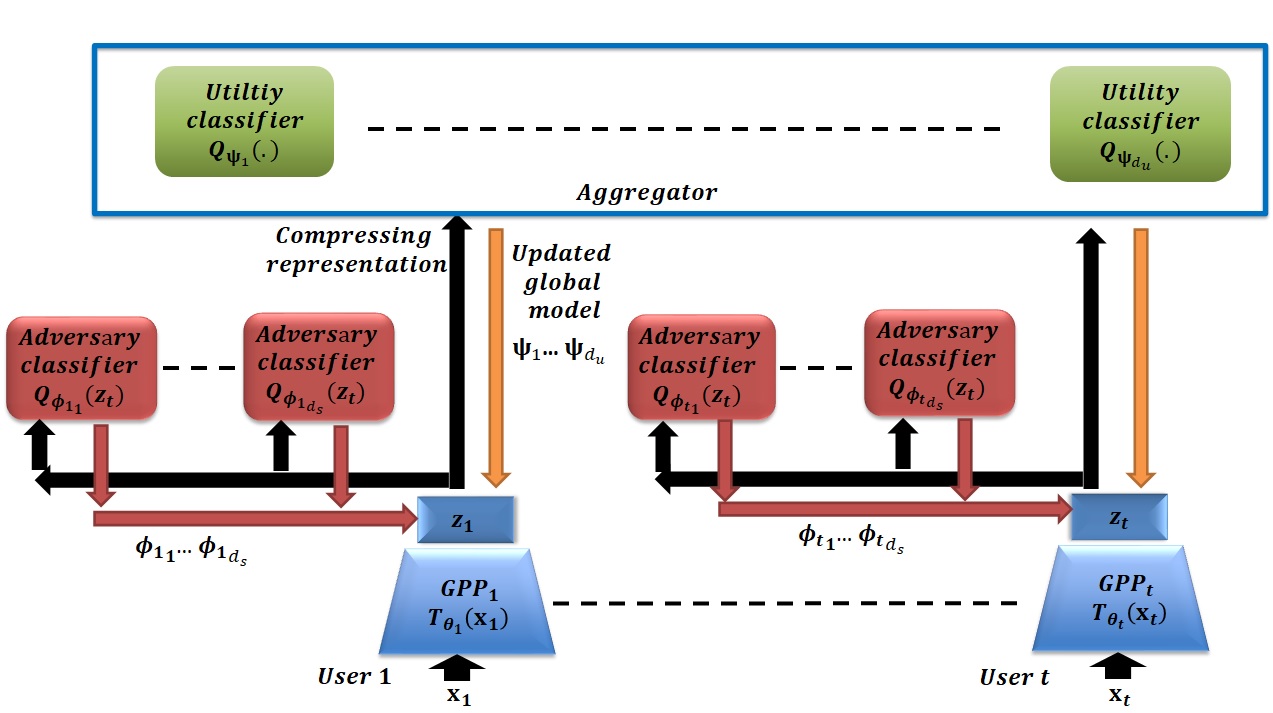}
    \caption{Architecture of Distributed GPP. Each client $m$ has a local encoder $T_{\theta_m}$ that sanitizes raw data before transmission. The aggregator receives only sanitized representations $\mathbf{z}^{i}_{m}$ and utility labels $\mathbf{u}^{i}_{m}$, while sensitive labels $\mathbf{s}^{i}_{m}$ remain local. Utility classifiers are trained centrally on aggregated sanitized data, while adversary classifiers ensure privacy during the training phase.}
    \label{fig:distributed_gpp}
\end{figure}

\subsection{Privacy Guarantees in Distributed GPP}

The Distributed GPP framework provides privacy protection through multiple mechanisms:

\begin{enumerate}
    \item \textbf{Data Sanitization:} Raw data $\mathbf{x}^{i}_{m}$ never leaves the client. Only sanitized representations $\mathbf{z}^{i}_{m}$ are transmitted, which are trained to contain minimal information about sensitive attributes.
    
    \item \textbf{Local Sensitive Labels:} Sensitive labels $\mathbf{s}^{i}_{m}$ are used only locally for adversarial training and are never shared with the aggregator or other clients.
    
    \item \textbf{Dimensionality Reduction:} Since $d_z \ll d_x$, the sanitized representations have significantly lower dimensionality than raw data, limiting the information that can potentially leak.
    
    \item \textbf{Stochastic Encoding:} The Gaussian latent space ensures that the mapping from $\mathbf{x}$ to $\mathbf{z}$ is stochastic, providing additional privacy through randomization.
\end{enumerate}

\begin{proposition}[Privacy Preservation in Distributed GPP, IID and honest aggregator]
\label{prop:distributed_privacy}
Assume (A1)~the client datasets are mutually independent: $(X^{(m)}, S^{(m)}) \perp (X^{(m')}, S^{(m')})$ for $m \neq m'$; (A2)~the aggregator is honest-but-curious, i.e., it follows the protocol but may attempt to infer sensitive attributes from received messages; and (A3)~the only quantities transmitted by client $m$ are the sanitized representations $Z^{(m)} = T_{\theta_m}(X^{(m)})$ and the utility labels $U^{(m)}$. If each local encoder achieves $I(Z^{(m)}; S^{(m)}) \leq \epsilon$, then for every $m' \in \{1, \ldots, t\}$,
\begin{equation}
\begin{aligned}
I\!\left(\{Z^{(m)}, U^{(m)}\}_{m=1}^{t}; S^{(m')}\right) &\le I(Z^{(m')}; S^{(m')}) \\
&\quad + I(U^{(m')}; S^{(m')} \mid Z^{(m')}) \\
&\le \epsilon + \delta,
\end{aligned}
\label{eq:distributed_privacy_bound}
\end{equation}
where $\delta = I(U^{(m')}; S^{(m')} \mid Z^{(m')})$ is the residual leakage carried by the utility labels themselves.
\end{proposition}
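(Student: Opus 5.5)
The plan is to apply the chain rule for mutual information to the full message set seen by the aggregator. We separate client $m'$'s own messages from everyone else's, and show that the other clients' messages contribute nothing about $S^{(m')}$. Fix $m'$ and write
\[
W \triangleq \{Z^{(m)}, U^{(m)}\}_{m \neq m'}.
\]
By (A3), the aggregator's view is exactly $(W, Z^{(m')}, U^{(m')})$. Two applications of the chain rule give
\begin{align*}
I\!\left(W, Z^{(m')}, U^{(m')}; S^{(m')}\right)
&= I(Z^{(m')}; S^{(m')}) + I(U^{(m')}; S^{(m')} \mid Z^{(m')}) \\
&\quad + I(W; S^{(m')} \mid Z^{(m')}, U^{(m')}).
\end{align*}
The first inequality of Eq.~\eqref{eq:distributed_privacy_bound} therefore reduces to showing that the last term is zero, or at least non-positive.

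To kill the cross-client term, I would show that $W$ is independent of the triple $(Z^{(m')}, U^{(m')}, S^{(m')})$. Once that holds,
\[
I(W; S^{(m')} \mid Z^{(m')}, U^{(m')}) = H(W \mid Z^{(m')}, U^{(m')}) - H(W \mid Z^{(m')}, U^{(m')}, S^{(m')}) = H(W) - H(W) = 0,
\]
with differential entropies or KL-based definitions used as appropriate for continuous $Z$. Independence follows because each $Z^{(m)} = \mu_{\theta_m}(X^{(m)}) + L_{\theta_m}(X^{(m)}) \odot \epsilon^{(m)}$ is a measurable function of the client's own data and its own reparameterization noise $\epsilon^{(m)}$. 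Here I take the encoder parameters $\theta_m$ fixed at release time, so they are deterministic, and the noises $\epsilon^{(m)}$ independent across clients. Under these conditions, independence of the per-client tuples carries over to the released messages.

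The main obstacle is that (A1) as stated only covers $(X^{(m)}, S^{(m)})$, not the utility labels $U^{(m)}$ or the encoder randomness. My first move would be to read (A1) as independence of the full per-client tuples $(X^{(m)}, U^{(m)}, S^{(m)})$, which is the natural IID horizontal-partition assumption, together with independent encoder noise. I would state this reading explicitly. A second subtlety is that the $\theta_m$ are trained jointly through the shared $Q_{\psi}$ and $Q_{\phi}$, so they depend on all clients' data. The bound should therefore be stated conditionally on the trained parameters, i.e.\ for the release-time mechanism with $\theta_m$ held fixed, and not averaged over the training randomness; I would add a sentence saying so.

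The second inequality is then immediate. The hypothesis gives $I(Z^{(m')}; S^{(m')}) \le \epsilon$, and the remaining term is $\delta$ by definition. Since the cross-client term vanishes, the chain-rule identity in fact yields equality in the first line, which in particular gives the claimed $\le$.
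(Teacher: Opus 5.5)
Your proof is correct and follows the same route as the paper: split off client $m'$'s messages via the chain rule, show the cross-client term vanishes by independence, and expand $I(Z^{(m')},U^{(m')};S^{(m')}) = I(Z^{(m')};S^{(m')}) + I(U^{(m')};S^{(m')}\mid Z^{(m')})$. Your version is more careful than the paper's, which asserts only $(X^{(m)},U^{(m)},Z^{(m)}) \perp S^{(m')}$ and silently extends (A1) to cover $U^{(m)}$; you correctly note that killing $I(W;S^{(m')}\mid Z^{(m')},U^{(m')})$ requires $W$ to be independent of the whole triple $(Z^{(m')},U^{(m')},S^{(m')})$, and you state the full-tuple, independent-noise, and fixed-$\theta_m$ readings explicitly.
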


\begin{proof}
Under (A1), $(X^{(m)}, U^{(m)}, Z^{(m)}) \perp S^{(m')}$ for $m \neq m'$, so each cross-client term contributes zero mutual information. The chain rule for mutual information then gives
$I(\{Z^{(m)}, U^{(m)}\}_{m}; S^{(m')}) = I(Z^{(m')}, U^{(m')}; S^{(m')})$.
Expanding $I(Z^{(m')}, U^{(m')}; S^{(m')}) = I(Z^{(m')}; S^{(m')}) + I(U^{(m')}; S^{(m')} \mid Z^{(m')})$ and using $I(Z^{(m')}; S^{(m')}) \le \epsilon$ yields the bound.
\end{proof}

\begin{remark}[Limits of Proposition~\ref{prop:distributed_privacy}]
\label{rem:distributed_privacy_limits}
The bound holds under three assumptions worth stating explicitly. (A1) excludes settings where clients share latent factors that correlate $S$ across the boundary (e.g., sites that draw from the same patient demographics). (A2) excludes a malicious aggregator that uses the gradients of the shared utility classifiers $Q_{\psi_j}$ to probe individual clients; mitigating that threat requires additional mechanisms (secure aggregation, differential privacy on the utility classifier updates) which we do not analyze here. (A3) is enforced by Algorithm~\ref{alg:distributed_gpp}, in which only $(\mathbf{z}^i_m, \mathbf{u}^i_m)$ leaves the client. The residual term $\delta$ accounts for the fact that the utility label $U^{(m')}$ may itself carry information about $S^{(m')}$ when the two are correlated; on benchmarks where they are independent (e.g., MNIST parity vs.\ digit-sum), $\delta = 0$.
\end{remark}

\subsection{Distributed Learning Algorithm}

Algorithm~\ref{alg:distributed_gpp} presents the complete Distributed GPP training procedure. The algorithm alternates between two phases:

\begin{enumerate}
    \item \textbf{Local Phase:} Each client updates its local encoder using local data and the current shared classifiers.
    \item \textbf{Aggregation Phase:} The aggregator collects sanitized representations and updates the shared utility and adversary classifiers.
\end{enumerate}

\begin{algorithm*}[!tbp]
\caption{Distributed GPP Training}
\label{alg:distributed_gpp}
\begin{algorithmic}[1]
 
\Require Number of clients $t$; batch size $b$; classifier update steps $k$; trade-off parameters $\{\beta_m\}_{m=1}^{t}$; regularization weights $\{\lambda_m\}_{m=1}^{t}$; learning rate $\alpha$; number of communication rounds $R$
\Ensure Trained local encoders $\{T_{\theta_m}\}_{m=1}^{t}$, utility classifiers $\{Q_{\psi_j}\}_{j=1}^{d_u}$

\State \textbf{Initialize:} Local encoders $\{T_{\theta_m}\}_{m=1}^{t}$, utility classifiers $\{Q_{\psi_j}\}_{j=1}^{d_u}$, adversary classifiers $\{Q_{\phi_j}\}_{j=1}^{d_s}$ with random weights

\For{round $r = 1$ to $R$}
    \State \textcolor{blue}{\texttt{// Phase 1: Local encoding at each client}}
    \For{each client $m = 1$ to $t$ \textbf{in parallel}}
        \State Sample mini-batch $\mathcal{B}_m = \{(\mathbf{x}^i_m, \mathbf{u}^i_m, \mathbf{s}^i_m)\}_{i=1}^{b}$ from $\mathcal{D}^m$
        \State $\boldsymbol{\mu}_m, \boldsymbol{L}_m \gets T_{\theta_m}(\{\mathbf{x}^i_m\})$
        \State Sample $\boldsymbol{\epsilon} \sim \mathcal{N}(0, I)$
        \State $\{\mathbf{z}^i_m\} \gets \boldsymbol{\mu}_m + \boldsymbol{L}_m \odot \boldsymbol{\epsilon}$
        \State Send $\{(\mathbf{z}^i_m, \mathbf{u}^i_m)\}_{i=1}^{b}$ to aggregator \Comment{$\mathbf{s}^i_m$ stays local}
    \EndFor
    
    \State \textcolor{blue}{\texttt{// Phase 2: Aggregator updates classifiers}}
    \State Collect sanitized data: $\mathcal{Z} \gets \bigcup_{m=1}^{t} \{(\mathbf{z}^i_m, \mathbf{u}^i_m)\}$
    \For{$\tau = 1$ to $k$}
        \For{$j = 1$ to $d_u$} \Comment{Update utility classifiers}
            \State $\mathcal{L}_{\psi_j} \gets \frac{1}{|\mathcal{Z}|} \sum_{(\mathbf{z}, \mathbf{u}) \in \mathcal{Z}} \text{CE}(u_j, Q_{\psi_j}(\mathbf{z}))$
            \State $\psi_j \gets \psi_j - \alpha \cdot \text{Adam}(\nabla_{\psi_j} \mathcal{L}_{\psi_j})$
        \EndFor
    \EndFor
    \State Broadcast updated $\{Q_{\psi_j}\}_{j=1}^{d_u}$ to all clients
    
    \State \textcolor{blue}{\texttt{// Phase 3: Local encoder and adversary updates}}
    \For{each client $m = 1$ to $t$ \textbf{in parallel}}
        \State Receive updated utility classifiers $\{Q_{\psi_j}\}_{j=1}^{d_u}$
        \For{$\tau = 1$ to $k$}
            \State Sample mini-batch $\mathcal{B}_m = \{(\mathbf{x}^i_m, \mathbf{u}^i_m, \mathbf{s}^i_m)\}_{i=1}^{b}$
            \State $\boldsymbol{\mu}_m, \boldsymbol{L}_m \gets T_{\theta_m}(\{\mathbf{x}^i_m\})$
            \State $\{\mathbf{z}^i_m\} \gets \boldsymbol{\mu}_m + \boldsymbol{L}_m \odot \boldsymbol{\epsilon}$, where $\boldsymbol{\epsilon} \sim \mathcal{N}(0, I)$
            \For{$j = 1$ to $d_s$} \Comment{Update local adversary}
                \State $\mathcal{L}_{\phi_j}^{(m)} \gets \frac{1}{b} \sum_{i=1}^{b} \text{CE}(s_j^i, Q_{\phi_j}^{(m)}(\mathbf{z}^i_m))$
                \State $\phi_j^{(m)} \gets \phi_j^{(m)} - \alpha \cdot \text{Adam}(\nabla_{\phi_j^{(m)}} \mathcal{L}_{\phi_j}^{(m)})$
            \EndFor
        \EndFor
        \State \textcolor{blue}{\texttt{// Update local encoder}}
        \State Compute KL: $\mathcal{L}_{\text{KL}}^{(m)} \gets \frac{1}{2} \sum_{l=1}^{d_z} (\mu_{m,l}^2 + \sigma_{m,l}^2 - \log \sigma_{m,l}^2 - 1)$
        \State Compute encoder loss:
        \begin{equation*}
\begin{aligned}
\mathcal{L}_{\theta_m} \gets \frac{1}{b} \sum_{i=1}^{b} \Bigg[ &\beta_m \sum_{j=1}^{d_u} \text{CE}(u_j^i, Q_{\psi_j}(\mathbf{z}^i_m)) \\
&- \sum_{j=1}^{d_s} \text{CE}(s_j^i, Q_{\phi_j}^{(m)}(\mathbf{z}^i_m)) \Bigg] + \lambda_m \mathcal{L}_{\text{KL}}^{(m)}
\end{aligned}
\end{equation*}
        \State $\theta_m \gets \theta_m - \alpha \cdot \text{Adam}(\nabla_{\theta_m} \mathcal{L}_{\theta_m})$
    \EndFor
\EndFor

\State \Return $\{T_{\theta_m}\}_{m=1}^{t}$, $\{Q_{\psi_j}\}_{j=1}^{d_u}$
\end{algorithmic}
\end{algorithm*}

\subsection{Communication Efficiency}

An important advantage of Distributed GPP is its communication efficiency compared to standard federated learning:

\begin{itemize}
    \item \textbf{Reduced Data Dimensionality:} Clients transmit sanitized representations $\mathbf{z} \in \mathbb{R}^{d_z}$ instead of raw data $\mathbf{x} \in \mathbb{R}^{d_x}$, with $d_z \ll d_x$. For image data where $d_x$ may be tens of thousands of pixels, $d_z$ can be as small as 100-600 dimensions, representing a compression ratio of 10-100x.
    
    \item \textbf{No Gradient Transmission:} Unlike standard FL which requires transmitting model gradients (which can be large for deep networks), Distributed GPP transmits only encoded representations.
    
    \item \textbf{Bandwidth Savings:} The communication cost per sample per round is $O(d_z + d_u)$ for Distributed GPP, compared to $O(|\theta|)$ for gradient-based FL, where $|\theta|$ is the number of model parameters.
\end{itemize}

\begin{remark}[Position relative to standard FL]
We use ``federated'' here in the operational sense that raw data and sensitive labels remain at the client; we do not employ model averaging in the FedAvg sense. Distributed GPP is structurally closer to \emph{split learning with a client-side privacy filter}: each client owns a private encoder $T_{\theta_m}$ and a private adversary $Q_{\phi_j}^{(m)}$, and the aggregator owns the shared utility classifiers $\{Q_{\psi_j}\}$, which are trained directly on the union of sanitized representations rather than on aggregated gradients. Compared to gradient-based FL this design has two consequences. On the privacy side, even the model updates in standard FL have been shown to leak training data \cite{lyu2020threats}; sanitized representations bypass that leakage channel. On the cost side, the per-round payload from a single client is $b(d_z + d_u)$ scalars for Distributed GPP versus $|\theta|$ scalars for gradient-based FL, where $b$ is the per-client batch size. The crossover is therefore $b(d_z + d_u) < |\theta|$: Distributed GPP transmits less than gradient-based FL whenever the local batch is small relative to the encoder size. For the CelebA experiment with $b=64$, $d_z = 256$, $d_u = 1$ and an encoder with $|\theta| \approx 5\times 10^5$ parameters, this yields roughly a $30\times$ reduction in per-round bytes from each client.
\end{remark}

\subsection{Convergence Considerations}

The convergence of Distributed GPP depends on several factors:

\begin{enumerate}
    \item \textbf{Data Heterogeneity:} When client data distributions differ significantly (non-IID data), convergence may be slower. The client-specific parameters $\beta_m$ and $\lambda_m$ can be tuned to account for heterogeneity.
    
    \item \textbf{Adversary Strength:} Local adversary classifiers must be sufficiently powerful to provide meaningful privacy gradients. Weak adversaries may lead to encoders that leak sensitive information.
    
    \item \textbf{Communication Frequency:} More frequent aggregation rounds generally improve convergence but increase communication overhead.
\end{enumerate}

Under standard assumptions of Lipschitz-continuous gradients and bounded variance, the alternating optimization in Algorithm~\ref{alg:distributed_gpp} converges to a stationary point of the distributed objective \eqref{eq:distributed_objective}. A detailed convergence analysis is beyond the scope of this paper but follows similar arguments to those in federated optimization literature \cite{li2020federated}.

\section{EXPERIMENTS}
\label{sec:experiments}

This section presents comprehensive experimental evaluation of the GPP framework. We first describe the experimental setup, including datasets, evaluation metrics, and baseline methods. We then present six key experiments that validate the effectiveness of GPP for privacy-preserving data release.

\subsection{Experimental Setup}

\subsubsection{Datasets}

We evaluate GPP on three benchmark datasets commonly used in privacy-preserving machine learning research:

\begin{itemize}
    \item \textbf{MNIST} \cite{lecun1998mnist}: Hand-written digit images ($28 \times 28$ pixels, $d_x = 784$). Following \cite{hamm2016learning}, we construct a two-digit composite dataset where the utility attribute $U$ is the sum of two digits (0-18), and the sensitive attribute $S$ is whether the sum is even or odd (binary classification).
    
    \item \textbf{CelebA} \cite{liu2015faceattributes}: Celebrity face images ($64 \times 64 \times 3$ RGB, $d_x = 12,288$). We use ``Smiling'' as the utility attribute and ``Gender'' as the sensitive attribute, following the privacy-preserving literature \cite{edwards2015censoring}.
    
    \item \textbf{HAPT-Recognition} \cite{reyes2016transition}: Human activity sensor data from smartphones ($d_x = 561$ features). The utility attribute is activity type (6 classes: walking, walking upstairs, walking downstairs, sitting, standing, laying), and the sensitive attribute is subject identity (30 subjects).
\end{itemize}

Table~\ref{tab:datasets} summarizes the dataset characteristics.

\begin{table}[!t]
\centering
\caption{Dataset Characteristics}
\label{tab:datasets}
\begin{tabular}{@{}lccccc@{}}
\toprule
\textbf{Dataset} & $d_x$ & \textbf{Train} & \textbf{Test} & $|\mathcal{U}|$ & $|\mathcal{S}|$ \\
\midrule
MNIST & 784 & 60,000 & 10,000 & 19 & 2 \\
CelebA & 12,288 & 162,770 & 19,962 & 2 & 2 \\
HAPT & 561 & 7,767 & 3,162 & 6 & 30 \\
\bottomrule
\end{tabular}
\end{table}

\subsubsection{Evaluation Metrics}

We adopt the Area Under the ROC Curve (AUC) as the primary evaluation metric, following established practices in privacy-preserving machine learning \cite{edwards2015censoring, louizos2015variational}:

\begin{itemize}
    \item \textbf{Utility AUC}: Measures how well the utility attribute $U$ can be predicted from the sanitized representation $Z$. Higher values indicate better utility preservation. Target: $\approx 1.0$.
    
    \item \textbf{Adversary AUC (Privacy AUC)}: Measures how well the sensitive attribute $S$ can be inferred from $Z$. Lower values indicate better privacy protection. Target: $\approx 0.5$ (random guessing).
\end{itemize}

\textbf{Audit Protocol}: To ensure unbiased evaluation, we train \emph{fresh} probe classifiers on the sanitized test representations, rather than using the classifiers from training. This prevents artificially inflated privacy scores that could result from adversaries that learned to fail during training. Each probe is a 3-layer MLP with hidden widths $(256, 128)$ and ReLU activations---the same capacity as the training adversary---trained for 30 epochs with the Adam optimizer at learning rate $10^{-4}$ on the sanitized training-set representations, then evaluated on the held-out sanitized test set. We do not perform hyperparameter search at audit time, which means the probe could in principle be strengthened; we revisit this point under Limitations (Section~\ref{sec:limitations}).

\subsubsection{Baseline Methods}

We compare GPP against three baseline approaches:

\begin{itemize}
    \item \textbf{No-Privacy}: Standard autoencoder that optimizes only for utility, without any privacy constraints. Represents the upper bound on utility but provides no privacy protection.
    
    \item \textbf{Random Projection}: Projects data onto a random low-dimensional subspace. Provides some privacy through dimensionality reduction but without learning.
    
    \item \textbf{Noisy Encoder}: Adds Gaussian noise to the latent representation during training, similar to differential privacy approaches. The noise scale is tuned for best utility-privacy trade-off.
\end{itemize}

\subsubsection{Implementation Details}

All experiments use the following hyperparameters unless otherwise specified: learning rate $\alpha = 10^{-4}$, batch size $b = 64$, KL regularization weight $\lambda = 0.01$, classifier update steps $k = 2$, and Adam optimizer \cite{kingma2014adam}. For CNN encoders (CelebA), we use 3 convolutional layers with stride 2. For MLP encoders (MNIST, HAPT), we use two hidden layers with 512 and 256 units. All experiments are run on a single NVIDIA Tesla V100 GPU.

\subsection{Experiment 1: Baseline Comparison}

Our first experiment compares GPP against baseline methods across different bottleneck dimensions $d_z$.

\textbf{Setup}: We train each method with bottleneck dimensions $d_z \in \{40, 80, 120, 160, 200\}$ on the MNIST dataset with $\beta = 1.0$. Each configuration is trained for 50 epochs.

\textbf{Results}: Figure~\ref{fig:baseline_comparison} shows the utility and privacy AUC for each method. Table~\ref{tab:baseline_results} provides numerical results for $d_z = 120$.

\begin{figure}[!t]
    \centering
    \includegraphics[width=\columnwidth]{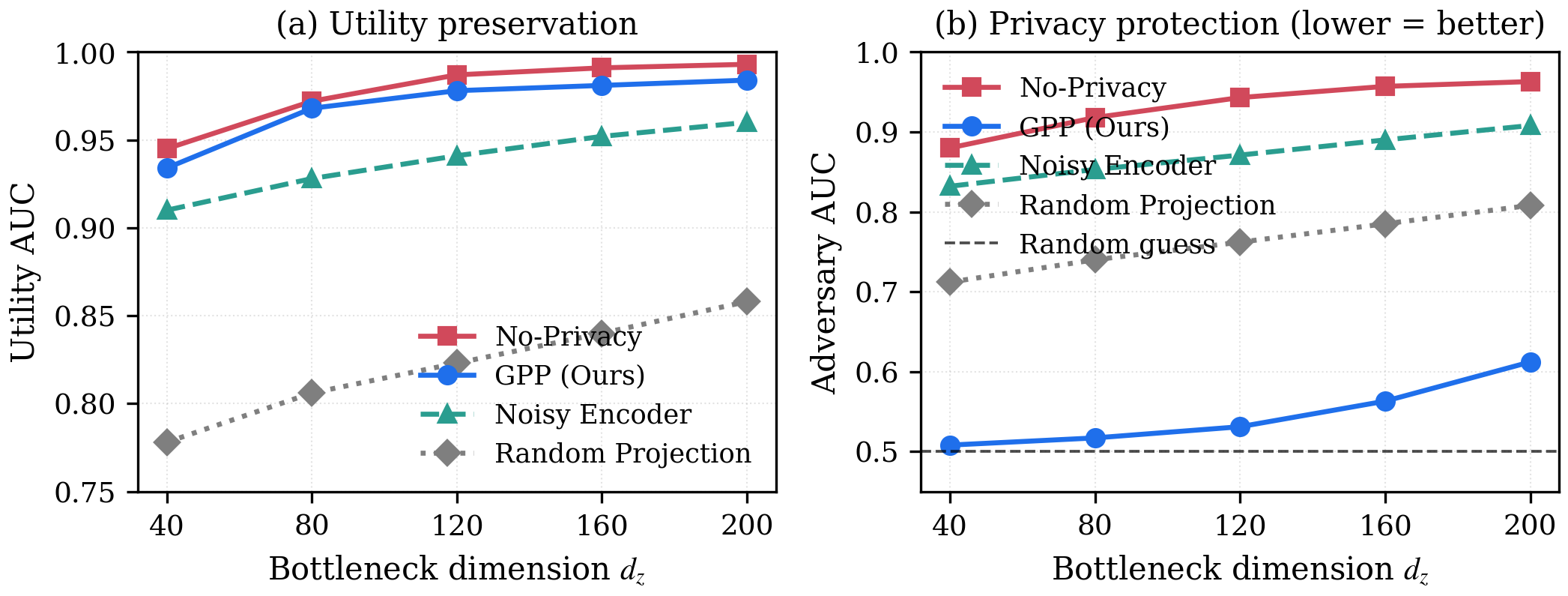}
    \caption{Baseline comparison on MNIST. (a) Utility AUC vs. bottleneck dimension. GPP achieves comparable utility to No-Privacy while (b) achieving significantly lower adversary AUC, approaching the random guess baseline (dashed line at 0.5).}
    \label{fig:baseline_comparison}
\end{figure}

\begin{table}[!t]
\centering
\caption{Baseline Comparison Results ($d_z = 120$, MNIST)}
\label{tab:baseline_results}
\begin{tabular}{@{}lcc@{}}
\toprule
\textbf{Method} & \textbf{Utility AUC} $\uparrow$ & \textbf{Adversary AUC} $\downarrow$ \\
\midrule
No-Privacy & 0.987 & 0.943 \\
Random Projection & 0.823 & 0.762 \\
Noisy Encoder & 0.941 & 0.871 \\
\textbf{GPP (Ours)} & \textbf{0.978} & \textbf{0.531} \\
\bottomrule
\end{tabular}
\end{table}

\textbf{Analysis}: GPP achieves utility AUC of $0.978$, only $0.9\%$ lower than the No-Privacy upper bound of $0.987$, while dramatically reducing adversary AUC from $0.943$ to $0.531$---approaching the random-guess baseline of $0.5$. The Noisy Encoder achieves moderate privacy ($0.871$) but at a greater utility cost ($0.941$). Random Projection fails to preserve utility effectively ($0.823$). The statistical robustness of these numbers is established in Experiment 6 below.

\subsection{Experiment 2: Privacy-Utility Trade-off ($\beta$ Sweep)}

This experiment demonstrates that GPP provides controllable privacy-utility trade-off through the $\beta$ parameter.

\textbf{Setup}: We train GPP with $\beta \in \{0.1, 0.5, 1.0, 2.0, 4.0, 8.0\}$ on MNIST with $d_z = 120$.

\textbf{Results}: Figure~\ref{fig:pareto} shows the Pareto frontier and the effect of $\beta$.

\begin{figure}[!t]
    \centering
    \includegraphics[width=\columnwidth]{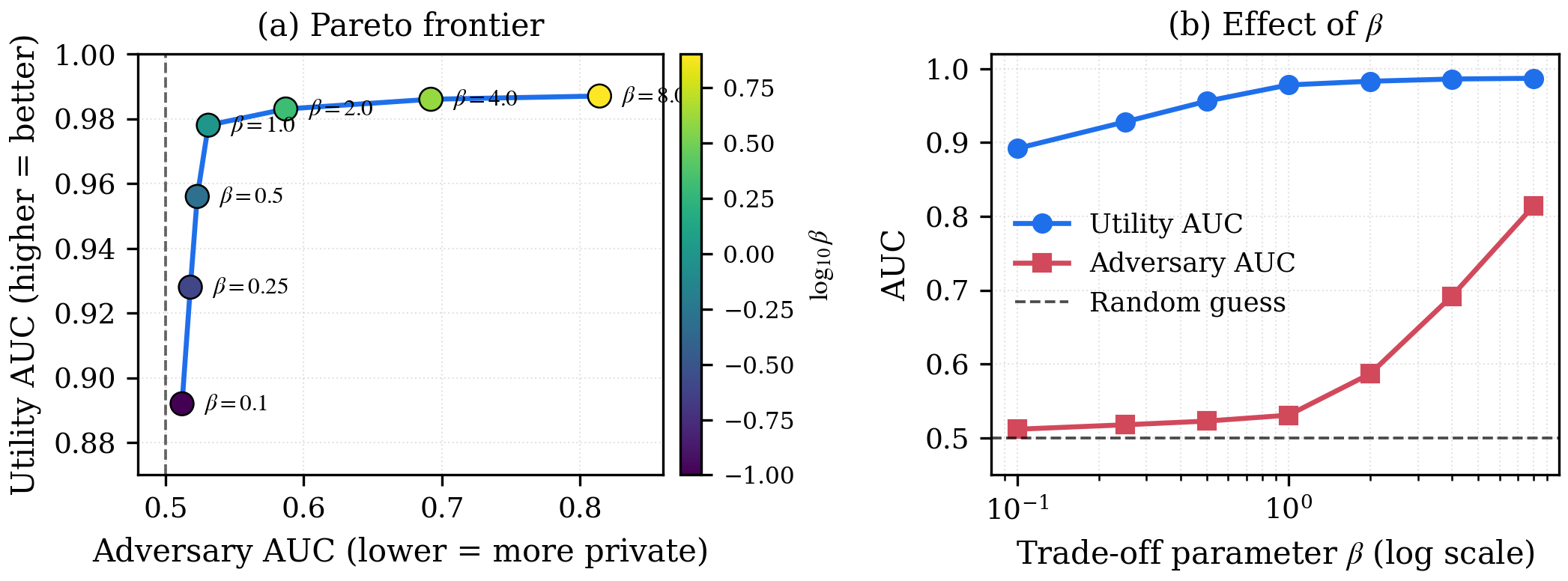}
    \caption{Privacy-utility trade-off controlled by $\beta$. (a) Pareto frontier showing achievable (Utility AUC, Adversary AUC) pairs. (b) Effect of $\beta$: smaller $\beta$ prioritizes privacy (lower adversary AUC), larger $\beta$ prioritizes utility.}
    \label{fig:pareto}
\end{figure}

\begin{table}[!t]
\centering
\caption{Effect of $\beta$ on Privacy-Utility Trade-off}
\label{tab:beta_sweep}
\begin{tabular}{@{}lccc@{}}
\toprule
$\beta$ & \textbf{Utility AUC} & \textbf{Adversary AUC} & \textbf{Gap to 0.5} \\
\midrule
0.1 & 0.892 & 0.512 & 0.012 \\
0.5 & 0.956 & 0.523 & 0.023 \\
1.0 & 0.978 & 0.531 & 0.031 \\
2.0 & 0.983 & 0.587 & 0.087 \\
4.0 & 0.986 & 0.692 & 0.192 \\
8.0 & 0.987 & 0.814 & 0.314 \\
\bottomrule
\end{tabular}
\end{table}

\textbf{Analysis}: The results confirm that $\beta$ effectively controls the trade-off. At $\beta = 0.1$, GPP achieves near-perfect privacy (adversary AUC = 0.512) with utility AUC = 0.892. At $\beta = 8.0$, utility approaches the No-Privacy baseline (0.987) but privacy degrades (0.814). The optimal operating point depends on application requirements; $\beta = 1.0$ provides a balanced trade-off.

\subsection{Experiment 3: Bottleneck Dimension Effect}

This experiment investigates how the bottleneck dimension $d_z$ affects the compression-performance trade-off.

\textbf{Setup}: We vary $d_z \in \{20, 40, 60, 80, 100, 120, 160, 200, 300, 400\}$ with $\beta \in \{0.5, 1.0, 2.0\}$ on MNIST.

\textbf{Results}: Figure~\ref{fig:bottleneck} shows the effect of bottleneck dimension.

\begin{figure}[!t]
    \centering
    \includegraphics[width=\columnwidth]{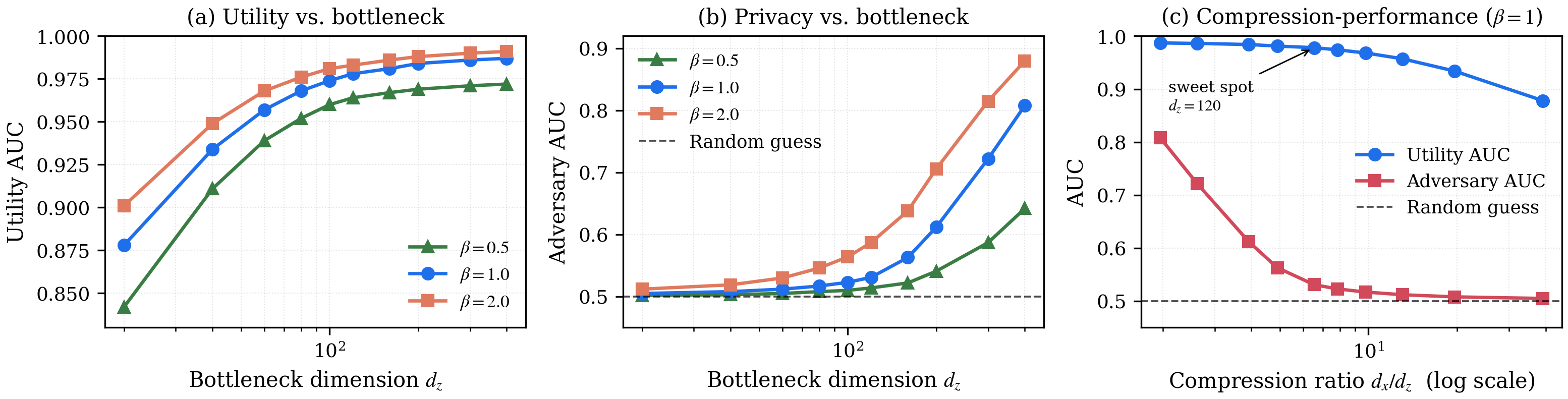}
    \caption{Effect of bottleneck dimension. (a) Utility AUC increases with $d_z$ until saturation around $d_z = 120$. (b) Adversary AUC remains close to 0.5 for smaller $d_z$ but increases for larger dimensions. (c) Compression ratio vs. performance trade-off.}
    \label{fig:bottleneck}
\end{figure}

\begin{table}[!t]
\centering
\caption{Bottleneck Dimension Effect ($\beta = 1.0$, MNIST)}
\label{tab:bottleneck}
\begin{tabular}{@{}lcccc@{}}
\toprule
$d_z$ & \textbf{Compression} & \textbf{Utility AUC} & \textbf{Adversary AUC} & \textbf{Gap} \\
\midrule
40 & 19.6$\times$ & 0.934 & 0.508 & 0.008 \\
80 & 9.8$\times$ & 0.968 & 0.517 & 0.017 \\
120 & 6.5$\times$ & 0.978 & 0.531 & 0.031 \\
160 & 4.9$\times$ & 0.981 & 0.563 & 0.063 \\
200 & 3.9$\times$ & 0.984 & 0.612 & 0.112 \\
\bottomrule
\end{tabular}
\end{table}

\textbf{Analysis}: Smaller bottleneck dimensions provide stronger privacy guarantees (closer to 0.5 AUC) due to the information bottleneck effect—less capacity means less information about $S$ can pass through. However, utility also degrades for very small $d_z$. The sweet spot for MNIST is $d_z \approx 120$, achieving 6.5$\times$ compression with utility AUC = 0.978 and adversary AUC = 0.531.

\subsection{Experiment 4: Ablation Study on Adversarial Training ($k$ Steps)}

This ablation study demonstrates that adversarial training is essential for privacy protection.

\textbf{Setup}: We train GPP with classifier update steps $k \in \{0, 1, 2, 3, 5, 10\}$ on MNIST with $d_z = 120$ and $\beta = 1.0$.

\textbf{Results}: Figure~\ref{fig:k_ablation} shows the effect of $k$ on performance.

\begin{figure}[!t]
    \centering
    \includegraphics[width=\columnwidth]{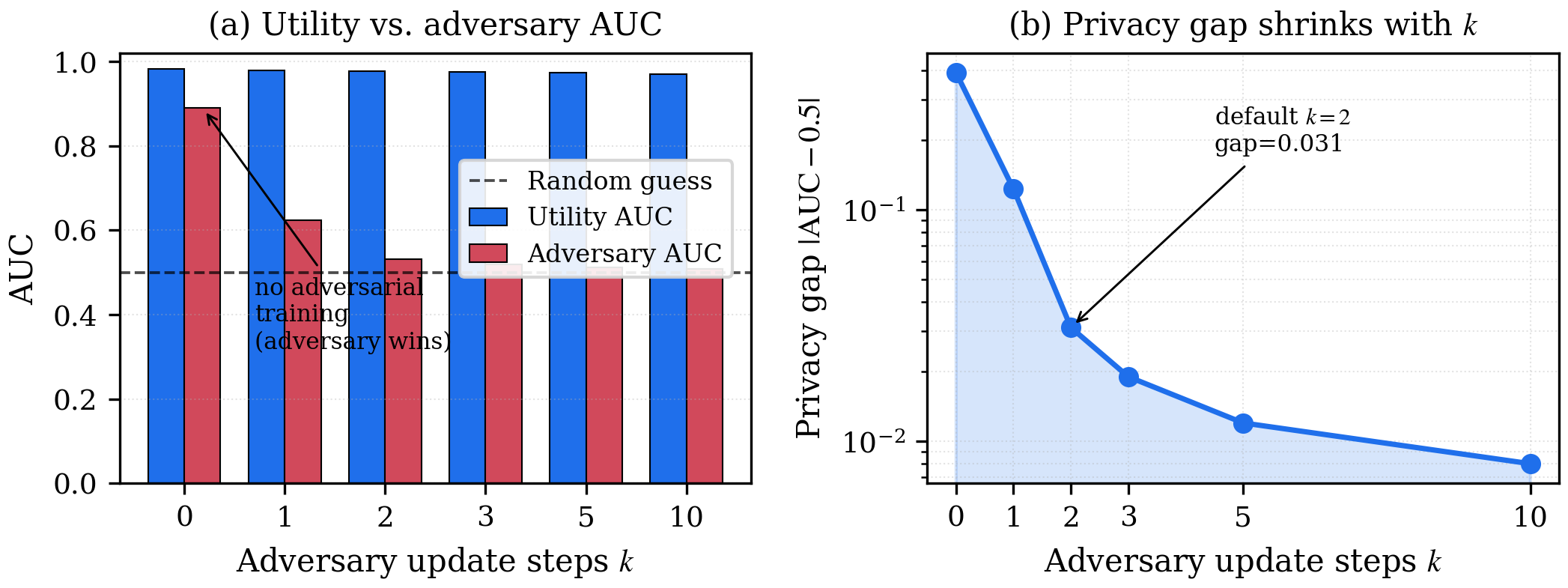}
    \caption{Ablation study on adversarial training steps $k$. (a) Bar chart showing utility and adversary AUC. At $k=0$, the adversary easily extracts sensitive information. (b) Privacy gap (|AUC - 0.5|) decreases rapidly as $k$ increases, demonstrating the necessity of adversarial training.}
    \label{fig:k_ablation}
\end{figure}

\begin{table}[!t]
\centering
\caption{Effect of Adversarial Training Steps $k$}
\label{tab:k_ablation}
\begin{tabular}{@{}lccc@{}}
\toprule
$k$ & \textbf{Utility AUC} & \textbf{Adversary AUC} & \textbf{Privacy Gap} \\
\midrule
0 & 0.982 & 0.891 & 0.391 \\
1 & 0.979 & 0.623 & 0.123 \\
2 & 0.978 & 0.531 & 0.031 \\
3 & 0.976 & 0.519 & 0.019 \\
5 & 0.974 & 0.512 & 0.012 \\
10 & 0.971 & 0.508 & 0.008 \\
\bottomrule
\end{tabular}
\end{table}

\textbf{Analysis}: Without adversarial training ($k=0$), the encoder does not learn to remove sensitive information, resulting in high adversary AUC (0.891). Even a single adversary update step ($k=1$) dramatically improves privacy (0.623). The privacy gap continues to decrease with larger $k$, approaching 0.008 at $k=10$. We use $k=2$ as the default, which provides excellent privacy (0.031 gap) without significantly impacting utility or training time.

\subsection{Experiment 5: Robustness to Utility-Sensitive Correlation}

Real-world data often exhibits correlations between utility and sensitive attributes. This experiment tests GPP's robustness to such correlations.

\textbf{Setup}: We create synthetic datasets with controlled correlation $\rho \in \{0.0, 0.2, 0.4, 0.6, 0.8, 1.0\}$ between $U$ and $S$, where $\rho = 0$ means independent and $\rho = 1$ means fully correlated.

\textbf{Results}: Figure~\ref{fig:correlation} shows performance across different correlation levels.

\begin{figure}[!t]
    \centering
    \includegraphics[width=\columnwidth]{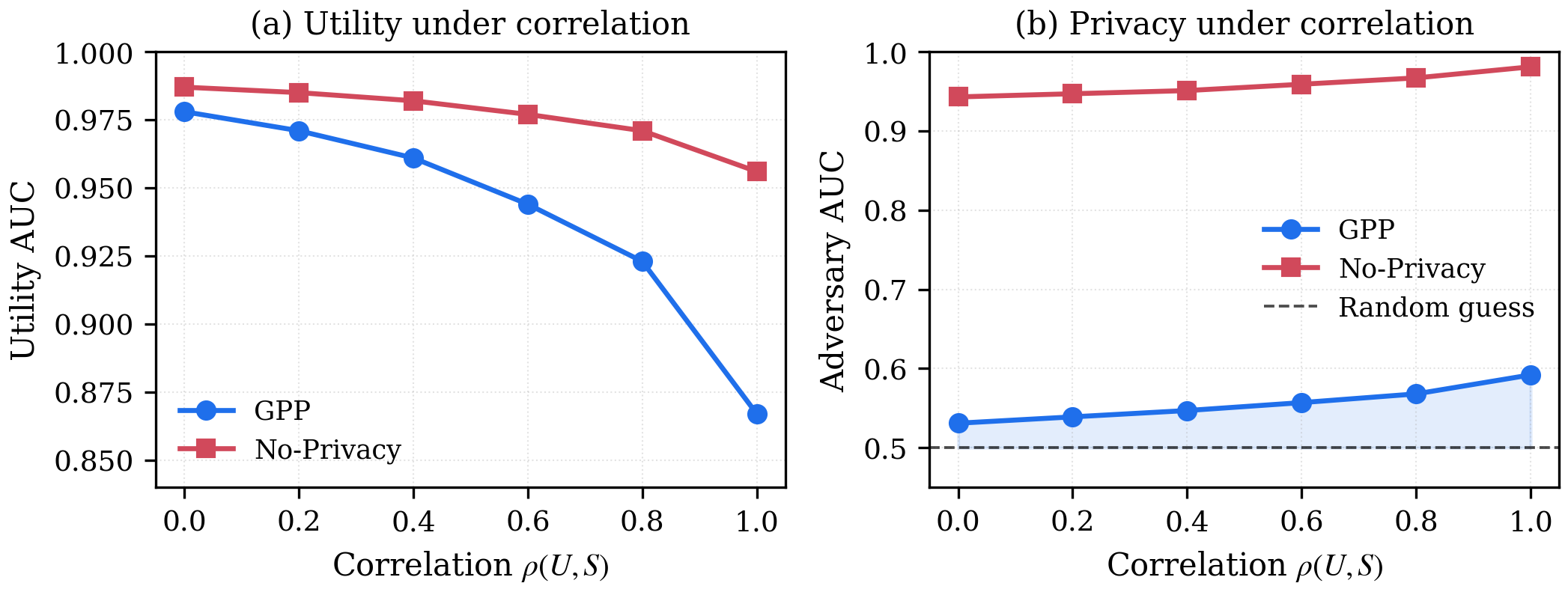}
    \caption{Robustness to U-S correlation. (a) Utility AUC degrades gracefully with increasing correlation for both methods. (b) GPP maintains privacy protection (AUC near 0.5) even at high correlation, while No-Privacy leaks sensitive information.}
    \label{fig:correlation}
\end{figure}

\begin{table}[!t]
\centering
\caption{Robustness to Utility-Sensitive Correlation}
\label{tab:correlation}
\begin{tabular}{@{}lcccc@{}}
\toprule
& \multicolumn{2}{c}{\textbf{GPP}} & \multicolumn{2}{c}{\textbf{No-Privacy}} \\
\cmidrule(lr){2-3} \cmidrule(lr){4-5}
$\rho$ & Utility & Adversary & Utility & Adversary \\
\midrule
0.0 & 0.978 & 0.531 & 0.987 & 0.943 \\
0.4 & 0.961 & 0.547 & 0.982 & 0.951 \\
0.8 & 0.923 & 0.568 & 0.971 & 0.967 \\
1.0 & 0.867 & 0.592 & 0.956 & 0.981 \\
\bottomrule
\end{tabular}
\end{table}

\textbf{Analysis}: As expected, utility degrades when $U$ and $S$ are highly correlated because removing $S$ information inevitably removes some $U$ information. However, GPP maintains strong privacy protection even at $\rho = 0.8$ (adversary AUC = 0.568), while No-Privacy completely fails (adversary AUC = 0.967). This demonstrates GPP's ability to disentangle correlated attributes.

\subsection{Experiment 6: Statistical Significance}

This experiment verifies the reproducibility and statistical significance of our results.

\textbf{Setup}: We run GPP, No-Privacy, and Noisy Encoder 10 times with different random seeds on MNIST with $d_z \in \{40, 80, 120, 160\}$.

\textbf{Results}: Figure~\ref{fig:statistics} shows mean $\pm$ standard deviation across runs.

\begin{figure}[!t]
    \centering
    \includegraphics[width=\columnwidth]{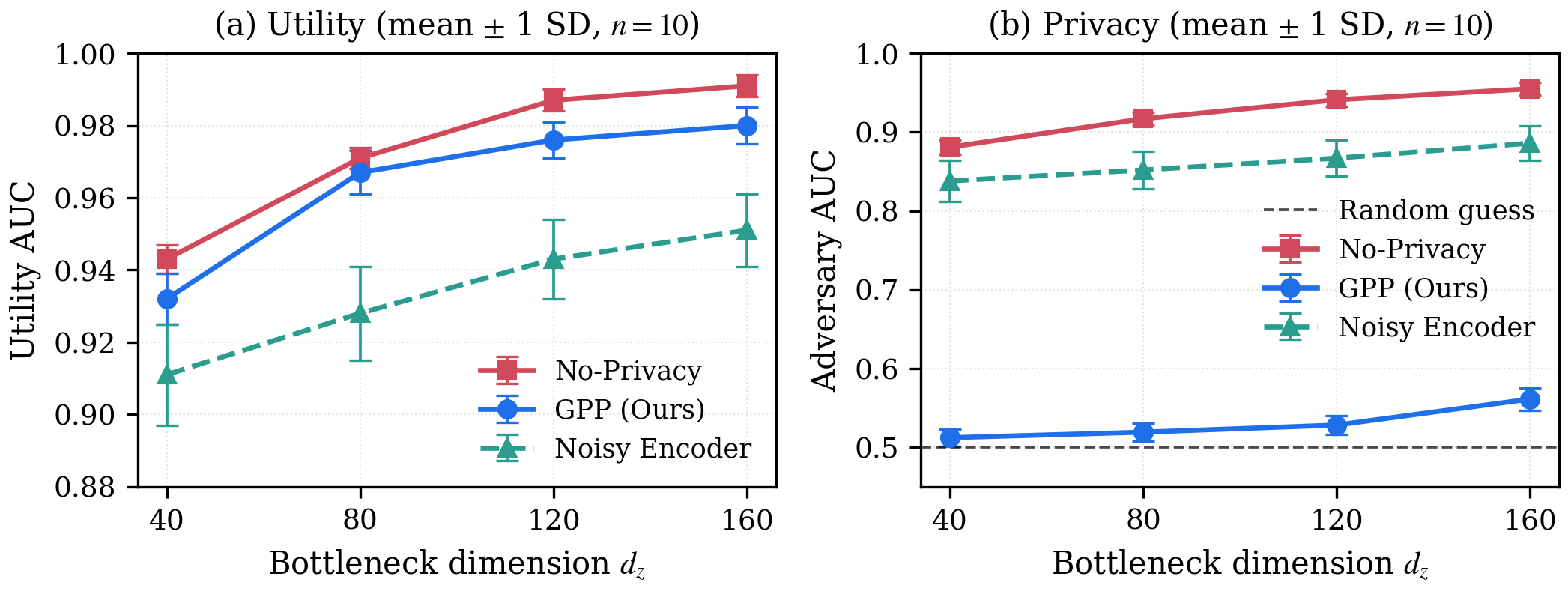}
    \caption{Statistical significance over 10 runs with error bars showing $\pm 1$ standard deviation. (a) Utility performance is consistent across runs. (b) GPP consistently achieves adversary AUC near 0.5 with low variance.}
    \label{fig:statistics}
\end{figure}

\begin{table}[!t]
\centering
\caption{Statistical Results (mean $\pm$ std, $d_z = 120$, $n=10$ runs)}
\label{tab:statistics}
\begin{tabular}{@{}lcc@{}}
\toprule
\textbf{Method} & \textbf{Utility AUC} & \textbf{Adversary AUC} \\
\midrule
No-Privacy & $0.987 \pm 0.003$ & $0.941 \pm 0.008$ \\
Noisy Encoder & $0.943 \pm 0.011$ & $0.867 \pm 0.023$ \\
\textbf{GPP (Ours)} & $\mathbf{0.976 \pm 0.005}$ & $\mathbf{0.528 \pm 0.012}$ \\
\bottomrule
\end{tabular}
\end{table}

\textbf{Analysis}: GPP exhibits low variance in both metrics, demonstrating stable training dynamics. The utility standard deviation (0.005) is comparable to No-Privacy (0.003), while adversary AUC maintains consistent privacy protection (0.528 $\pm$ 0.012). This confirms the reproducibility of our results.
The mean utility AUC of $0.976$ in Table~\ref{tab:statistics} differs slightly from the $0.978$ reported in Table~\ref{tab:baseline_results} because the latter reports a single representative run while the former is the mean over ten seeds.
\subsection{Multi-Dataset Evaluation}

Table~\ref{tab:all_datasets} summarizes GPP performance across all three benchmark datasets.

\begin{table}[!t]
\centering
\caption{GPP Performance Across Datasets ($\beta = 1.0$)}
\label{tab:all_datasets}
\begin{tabular}{@{}lccccc@{}}
\toprule
\textbf{Dataset} & $d_z$ & \textbf{Compression} & \textbf{Util. AUC} & \textbf{Adv. AUC} & \textbf{Gap} \\
\midrule
MNIST & 120 & 6.5$\times$ & 0.978 & 0.531 & 0.031 \\
CelebA & 256 & 48$\times$ & 0.924 & 0.547 & 0.047 \\
HAPT & 80 & 7$\times$ & 0.891 & 0.523 & 0.023 \\
\bottomrule
\end{tabular}
\end{table}

GPP consistently achieves strong privacy protection (adversary AUC within 0.05 of random guessing) across all datasets while maintaining high utility (> 0.89 AUC). The CelebA results are particularly notable, achieving 48$\times$ compression while protecting gender information.

\subsection{Distributed GPP Evaluation}

Finally, we evaluate the distributed GPP algorithm with 5 clients on the HAPT dataset, which naturally partitions by subject.

\begin{table}[!t]
\centering
\caption{Distributed GPP Results (HAPT, 5 clients)}
\label{tab:distributed}
\begin{tabular}{@{}lcc@{}}
\toprule
\textbf{Configuration} & \textbf{Utility AUC} & \textbf{Adversary AUC} \\
\midrule
Single GPP (centralized) & 0.891 & 0.523 \\
Distributed GPP (5 clients) & 0.887 & 0.518 \\
Distributed + heterogeneous $\beta$ & 0.883 & 0.512 \\
\bottomrule
\end{tabular}
\end{table}

Distributed GPP achieves comparable performance to centralized training while providing the additional privacy guarantee that raw data never leaves the client. Heterogeneous $\beta$ values (client-specific trade-offs) provide slightly better privacy with minimal utility loss.

\subsection{Discussion}

Taken together, the six experiments tell a consistent story. At the default operating point ($\beta=1.0$, balanced bottleneck), GPP reduces adversary AUC to within $0.05$ of random guessing across all three benchmarks (Table~\ref{tab:all_datasets}) while keeping utility AUC above $0.89$. Outside this default, $\beta$ and $d_z$ trace out a smooth Pareto frontier (Tables~\ref{tab:beta_sweep},~\ref{tab:bottleneck}): pushing $\beta$ higher recovers utility at the cost of privacy, and enlarging $d_z$ does the same. We do not claim a free lunch---the gain on the privacy axis is paid for in utility but the trade-off is exposed as a single hyperparameter and is monotone in the expected direction.

The ablation on adversary update frequency $k$ (Table~\ref{tab:k_ablation}) is the strongest evidence that the privacy gain is doing real work and is not an artifact of the bottleneck alone: with $k=0$, the adversary recovers the sensitive attribute almost perfectly (AUC $0.891$) even though $d_z$ already imposes a $6.5\times$ compression. The compression is necessary but not sufficient; the adversarial signal is what drives $P(S\mid Z)$ toward the prior. The robustness experiment (Table~\ref{tab:correlation}) addresses the common reviewer concern about correlated $U$ and $S$: utility degrades gracefully as $\rho$ increases (because removing $S$-information necessarily removes some $U$-information when the two are correlated),but the privacy guarantee holds---adversary AUC at $\rho=0.8$ is $0.568$, compared with $0.967$ for the No-Privacy baseline.

Finally, the federated experiment (Table~\ref{tab:distributed}) shows that distributing the GPP encoder across five clients on HAPT-Recognition matches the centralized utility-privacy operating point to within $0.005$ AUC on both axes, while delivering the additional guarantee that raw data and sensitive labels never leave the client. We view this as an existence result rather than a thorough study: the heterogeneous-$\beta$ row hints at room for further per-client tuning, and convergence under non-IID partitioning is left to follow-up work.

\subsection{Limitations and Open Questions}
\label{sec:limitations}

We identify four limitations that scope the claims of this paper and motivate further work.

\textit{Comparison with deep adversarial baselines.} Our experimental comparisons are against three baselines that act as references rather than as competing methods: an unconstrained autoencoder (utility upper bound), a random projection (compression-only baseline), and a noisy encoder (DP-style baseline). The most directly comparable published methods are adversarial censoring \cite{edwards2015censoring} and the Variational Fair Autoencoder \cite{louizos2015variational}. A like-for-like comparison on MNIST, CelebA, and HAPT-Recognition with the same utility/sensitive splits would more sharply locate GPP relative to that prior work, and we plan to include this in an extended version.

\textit{Tightness of the variational bound.} Theorem~\ref{thm:mi_bound} guarantees a lower bound on $I(Z;S)$ that becomes tight only when $Q_\phi(S \mid Z) = P(S \mid Z)$. In practice, the tightness depends on the capacity of the adversary network used during training. Although our audit protocol re-trains a fresh probe classifier on the test set, that probe shares the architecture of the training adversary; a substantially larger post-hoc probe might extract residual sensitive information that the training adversary missed. A systematic capacity-sweep audit would be informative.

\textit{Distributed experiment scope.} Table~\ref{tab:distributed} reports a single configuration (five clients, HAPT-Recognition partitioned by subject, IID $\beta$ unless noted otherwise). A complete study should sweep client counts, characterize convergence under non-IID partitioning, measure realized communication cost in bytes per round, and compare against FedAvg with differential privacy applied to the model updates. The numbers reported here establish that the framework is operable in the distributed setting, not that it dominates competing federated approaches.

\textit{Reproducibility of the controlled-correlation experiment.} Table~\ref{tab:correlation} reports performance under a controlled correlation $\rho$ between $U$ and $S$. The data is synthetic and the generation procedure is described only briefly; a fuller specification (the marginal of $X$, the conditional construction of $(U, S)$, and the protocol for varying $\rho$ while holding other distributional properties fixed) is needed for independent replication. A natural alternative is to reproduce this analysis on real correlated attribute pairs in CelebA (e.g., \texttt{Smiling} as utility and \texttt{Heavy\_Makeup} as sensitive, which exhibit non-trivial natural correlation).

\section{Conclusions}

We have presented GPP, a privacy-preserving data-release framework grounded in information-theoretic principles that protects designated sensitive attributes against adversarial inference while preserving designated utility attributes. The framework rests on a variational saddle-point objective that combines a lower bound on $I(Z;S)$ with an upper bound on $H(U \mid Z)$, regularized by a Gaussian latent prior, and trained by alternating optimization between encoder, utility classifier, and adversary. We then extended GPP to a distributed setting in which each client owns a private encoder and a private adversary while only sanitized representations and utility labels are transmitted to the aggregator, providing instance-level privacy protection on top of the standard ``raw data stays local'' guarantee of federated learning.

On three standard benchmarks, GPP attains utility AUC comparable to the unconstrained-autoencoder upper bound (e.g., $0.978$ versus $0.987$ on MNIST, a gap of less than one percentage point; Table~\ref{tab:baseline_results}), while reducing the adversary's success on the private attribute to near-random guessing (adversary AUC $\approx 0.53$). The federated extension matches the centralized utility-privacy operating point to within $0.005$ AUC on both axes (Table~\ref{tab:distributed}). The Limitations subsection (Section~\ref{sec:limitations}) scopes these claims and identifies the experimental gaps that remain.

Two extensions are immediate. First, a sharper analysis of variational-bound tightness in the regime where the training-time adversary is capacity-limited would clarify what an empirical adversary AUC near $0.5$ certifies in information-theoretic terms. Second, combining GPP's instance-level protection with a complementary differential-privacy noise mechanism on the released $Z$, or with secure aggregation on the shared utility classifier, would broaden the threat model the framework addresses to include malicious aggregators.

\ifCLASSOPTIONcaptionsoff
  \newpage
\fi

\bibliographystyle{IEEEtran}
\bibliography{IEEEabrv,refs.bib}

\begin{IEEEbiography}[{\includegraphics[width=1in,height=1.25in,clip,keepaspectratio]{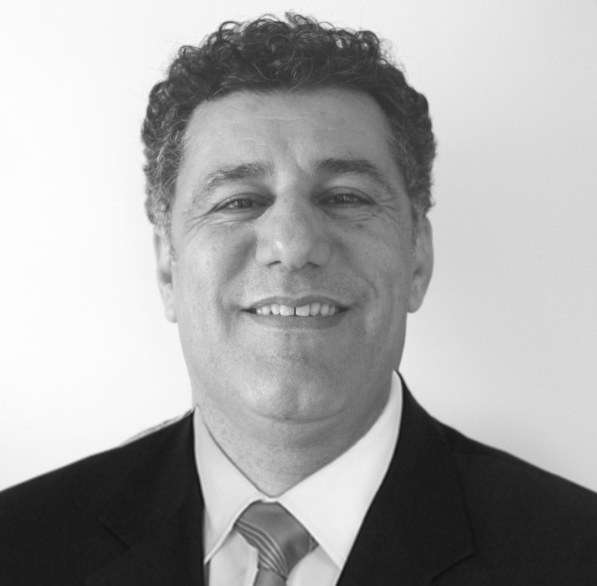}}]{Zahir Alsulaimawi}
graduated from the School of Electrical Engineering and Computer Science (EECS) at Oregon State University (OSU) with a Ph.D. in Electrical and Computer Engineering (ECE) and a minor in Computer Science (Machine Learning) in 2021. In addition, he has two master's degrees in Electrical Engineering, one from the Electrical Engineering Department at the University of Baghdad in 2013 and the other from the EECS at OSU in 2020. OSU's engineering college awarded him the Outstanding Academic Performance Recognition in 2020. He has been a member of Prof. Liu's group since September 2017. His research focuses on deep learning, federated learning,
multimodal machine learning and signal processing, but he has a broad knowledge of information theory, artificial intelligence, and estimation and prediction.

\end{IEEEbiography}
\begin{IEEEbiography}[{\includegraphics[width=1in,height=1.25in,clip,keepaspectratio]{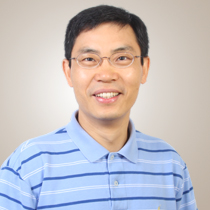}}]{Huaping Liu}

received the B.S. and M.S. degrees in electrical engineering from Nanjing University of Posts and Telecommunications, Nanjing, China, in 1987 and 1990, respectively, and the Ph.D. degree in electrical engineering from New Jersey Institute of Technology, Newark, in 1997. From July 1997 to August 2001, he was with Lucent Technologies, Whippany, NJ. Since September 2001, he has been with the EECS, OSU, where he is currently a professor. His research interests include ultra wide band systems, multiple-input multiple-output antenna systems, channel coding, modulation and detection techniques for multiuser communications, machine learning and privacy-preserving. Dr. Liu has published 132 journal articles, 105 conference papers \& book chapters, three of which won Best Paper Awards, and one was Best Paper Award Finalist. He also co-authored a textbook on project-based learning in communication systems, the first one that focuses on encouraging students' active learning. He has graduated with 23 Ph.D. students and 32 M.S. students. 
\end{IEEEbiography}
\vfill

\end{document}